\DeclareMathOperator*{\argmin}{arg\,min}
\begin{document}
\title{Low-Complexity Joint  Power Allocation and Trajectory Design for UAV-Enabled Secure Communications with Power Splitting}


\author{ Kaidi Xu, Ming-Min Zhao, Yunlong Cai, and Lajos Hanzo
\thanks{
K. Xu, M. M. Zhao, and Y. Cai are with the College of Information Science and Electronic Engineering, Zhejiang University, Hangzhou 310027, China (e-mail: xukaidi13@126.com; zmmblack@zju.edu.cn; ylcai@zju.edu.cn).

L. Hanzo is with the Department of ECS, University of Southampton, U.K. (Email: lh@ecs.soton.ac.uk).

}
}

\maketitle
\vspace{-3.3em}
\begin{abstract}
An unmanned aerial vehicle (UAV)-aided secure communication system is conceived and investigated, where the UAV transmits legitimate information to a ground user in the presence of an eavesdropper (Eve). To guarantee the security, the UAV employs a power splitting approach, where its transmit power can be divided into two parts for transmitting confidential messages and artificial noise (AN), respectively. We aim to maximize the average secrecy rate by jointly optimizing the UAV's trajectory, the transmit power levels and the corresponding power splitting ratios allocated to different time slots during the whole flight time, subject to both the maximum UAV speed constraint, the total mobility energy constraint, the total transmit power constraint, and other related constraints. To efficiently tackle this non-convex optimization problem, we propose an iterative algorithm by blending the benefits of the block coordinate descent (BCD) method, the concave-convex procedure (CCCP) and the alternating direction method of multipliers (ADMM). Specially, we show that the proposed algorithm exhibits very low computational complexity and each of its updating steps can be formulated in a nearly closed form. Our simulation results validate the efficiency of the proposed algorithm.
\end{abstract}
\vspace{-1.3em}
\begin{IEEEkeywords} \vspace{-1em}
Physical layer security, UAV, artificial noise, trajectory design, power allocation.
\end{IEEEkeywords}

\IEEEpeerreviewmaketitle

\vspace{-1em}
\section{Introduction}
\label{sec:intro}
Unmanned aerial vehicle (UAV) communications have recently attracted growing research interests in both academia and industry \cite{Zeng2016Mag, CaiJSAC2018, SunSec2019, Zhao2020UAV, Ericsson, Zeng2019, Zhang2019proceeding, Xu2019UAV}, due to many unique features and benefits, such as their prompt on-demand deployment, low latency as well as agility and flexibility. Since UAVs are generally expected to operate at a higher altitude than conventional cellular base stations (BSs), the line-of-sight (LoS) component dominates the
air-to-ground/ground-to-air channels in many practical scenarios~\cite{3GPP}. Hence UAV-aided LoS links tend to have better channel quality than typical terrestrial channels, which often suffer from severe fading and shadowing effects. However, unfortunately the UAV-aided LoS links suffer from an increased eavesdropping probability~\cite{LiangIT2008} due to the open nature of wireless channels. From this perspective, the LoS propagation of UAVs becomes a double-edged sword, since additionally the terrestrial communications are also exposed to malicious UAVs. Therefore, the delicate handling of the underlying security issues holds the key to unlocking the potential of UAV-aided communications.

Recently, physical layer security has drawn significant attention in UAV-enabled communication systems as a promising technique of protecting legitimate transmissions against eavesdropping attacks and also as a complement of conventional encryption techniques~\cite{Wu2019Sec, Xiao2018UAV}. Focusing on resource allocation/management for secrecy communication performance maximization, a range of physical layer security (PLS) techniques have been considered in the literature, such as UAV-mounted BSs~\cite{Zhang2019Secure, Li2019CL, yao2019joint, WuUAV2020}, UAV-enabled relaying~\cite{Wang2017WCL} and UAV-assisted cooperative jamming~\cite{CaiJSAC2018, Zhou2018UAV, Li2019UAV, Hua2019UAV, Chen2019UAV, Zhong2019UAV}, etc. In particular, a single-UAV communication system was investigated in~\cite{Zhang2019Secure}, where the UAV sends confidential information to a legitimate ground user (Bob) in the presence of a ground-based eavesdropper, and the secrecy rate is maximized by jointly allocating the UAV's transmit power and
optimizing its flight trajectory. The authors of~\cite{Li2019CL} have considered a scenario of multiple users and maximized the minimum secrecy rate for ensuring fairness among the users. By contrasts, the authors of~\cite{yao2019joint} considered coordinated multi-point (CoMP) reception of the legitimate users and three-dimensional (3D) trajectory optimization in the presence of multiple suspicious eavesdroppers. In~\cite{WuUAV2020}, the total transmit power of the UAV-mounted BS was minimized through joint beamforming
optimization. As a further development, the authors of \cite{Wang2017WCL} studied the security problems of UAV-aided relaying systems and judiciously allocated the transmit power levels at the source and the UAV.

Furthermore, in addition to exploiting the agile maneuverability of the UAVs for improving their secrecy performance, UAVs can also be employed as cooperative friendly jammers~\cite{Vilela2011} that are able to send artificial noise (AN) (can be viewed as external interference signals) to assist the legitimate users \cite{CaiJSAC2018, Zhou2018UAV, Li2019UAV, Hua2019UAV, Chen2019UAV, Zhong2019UAV}. Specifically, in~\cite{CaiJSAC2018}, a dual-UAV-aided secure communication scheme has been proposed, where a second UAV was employed to jam a number of eavesdroppers on the ground. In \cite{Zhou2018UAV}, the impact of the UAV's jamming power and position on the outage probability and intercept probability have been examined. In order to improve the secrecy rate, in~\cite{Li2019UAV} a mobile UAV-aided jammer was harnessed for opportunistically interfering with the potential Eve. The authors of~\cite{Hua2019UAV} studied the associated secrecy energy efficiency maximization problem, where multiple source UAVs and jamming UAVs work cooperatively to serve the ground users. In~\cite{Chen2019UAV}, AN beamforming and cooperative jamming were utilized, whilst only relying on location and statistical channel state information (CSI) of the eavesdroppers, where imperfect CSI knowledge between the UAV-aided jammer and the destination was considered. Finally, the authors of~\cite{Zhong2019UAV} considered the worst-case secrecy rate maximization problem by taking into account the uncertainty of Eve’s location.

Against the above backdrop, we investigate a UAV-enabled secure communication system, where the UAV transmits legitimate information to a ground-user Bob in the presence of a ground-based Eve. In contrast to prior studies, we conceive a power splitting aided secure transmission scheme for protecting the UAV's communications. Explicitly, the UAV divides its transmit power into two parts, where a portion $\rho$ of the signal power is used for transmitting confidential messages to Bob, while the remaining portion $1-\rho$ is devoted to transmitting AN to interfere with Eve's reception. By relying on this power splitting approach and exploiting the nimble mobility of the UAV, we aim for jointly optimizing the
trajectory of the UAV and the communicating/jamming power levels over time for maximizing the average secrecy rate of the UAV-Bob link, subject to the maximum UAV speed constraint, the total propulsion energy constraint, the total transmit power constraint, and other related constraints. To solve the resultant highly non-convex optimization problem efficiently, we propose an low-complexity iterative algorithm by combining the benefits of the block coordinate descent (BCD) method~\cite{beck2013convergence}, the concave-convex
procedure (CCCP) method~\cite{CCCP2009} and the alternating direction method of multipliers (ADMM)~\cite{boyd2011distributed}.

Specifically, in order to address the related optimization variable coupling issues, we propose to decompose the original problem into two subproblems, i.e. the power allocation subproblem and the trajectory optimization subproblem, by applying the BCD method. The resultant subproblems, although much simplified compared to the original problem, they still remain non-convex. Therefore, by exploiting the fact that the underlying non-convex parts admit a difference-of-convex (DC) structure, we propose to transform them into more tractable forms with the aid of first-order approximations. We first show that a nearly closed-form optimal solution of the approximated power allocation subproblem can be devised by resorting to its Lagrangian dual problem. Then, by tactfully introducing auxiliary variables, the approximated trajectory optimization subproblem can be iteratively and globally solved by the ADMM method, and we demonstrate that each updating step therein can also be conducted in closed-form. Given the fact that the existing algorithms suitable for solving joint power and trajectory optimization problems usually involve standard convex solvers, such as CVX \cite{cvx}, the proposed algorithm exhibits a very attractive and unique feature, namely that the optimization can be formulated almost in closed-form, thus imposing a low computational complexity. Furthermore, the proposed algorithm is proved to be monotonically convergent. Our numerical results show the benefits of the power spitting approach proposed.

The main contributions of this treatise are as follows:
 \begin{enumerate}
\item
We formulate a joint power and trajectory optimization problem for a UAV-aided secure communication system relying on a power splitting approach for improving the secrecy performance.

\item
To solve this challenging optimization problem, we propose a low-complexity iterative algorithm and show that each step in the proposed algorithm can be represented in a nearly closed form.

\item
We provide comprehensive numerical results for characterizing the efficiency of the proposed algorithm and the power splitting approach advocated. We then demonstrate the impact of the key system parameters on the average secrecy rate. In particular, we show that by appropriately splitting the transmit power of the UAV, the overall system performance can be substantially improved as compared to that without power splitting. Furthermore, compared to the existing algorithms using CVX, the running time of the proposed algorithm is at least $30$ times lower.
\end{enumerate}

This paper is structured as follows. In Section \ref{Section2:system}, we introduce the considered UAV-enabled secure communication system and formulate the joint optimization problem. In Section \ref{Section3:cccpao}, we propose an efficient iterative algorithm to solve the considered problem with very low complexity and guaranteed convergence.  Simulation results are presented in Section \ref{Section6:simulations} to show the effectiveness of our proposed algorithm and conclusions are drawn in Section
\ref{Section7:conclusion}.

\emph{Notations:} Scalars, vectors and matrices are respectively denoted by lower case, boldface lower case and boldface upper case letters. For a  matrix $\mathbf{A}$, ${{\bf{A}}^T}$ denote its transpose. For a vector $\mathbf{a}$, $\|\mathbf{a}\|$ represents its Euclidean norm. $|  \cdot  |$ denotes  the absolute value of any real or complex scalar. ${\mathbb{R}^{m \times n}}$ denotes the space of ${m \times n}$ real matrices. The set difference is defined as $\mathcal{A}\backslash \mathcal{B} \triangleq \{x| x\in\mathcal{A},x\notin \mathcal{B}\}$.  $[x]^+  \triangleq \max(x, 0)$.

\vspace{-1em}
\section{System model and Problem formulation}
In this section, we introduce the system model and formulate the optimization problem of interest. 
\label{Section2:system}
\vspace{-1em}
\subsection{System Model}
We consider a secure communication system where a UAV transmits confidential information to Bob in the presence of a potential Eve, as shown in Fig. \ref{SystemFigure}. In order to improve the security of the UAV-Bob link, the UAV also sends jamming signals (through injecting AN) to interfere Eve’s signal reception and increase the secrecy capacity.

Without loss of generality, we consider a 3D Cartesian coordinate system with Bob and Eve located at $(0, 0, 0)$ and $(L, 0, 0)$, respectively, i.e., Bob and Eve are both on the ground with a distance of $L$ meters (m). For simplicity, we focus on the UAV’s operation during a finite duration of $N$ seconds (s) and ignore its take-off and landing phases. We further assume that the UAV is flying at a fixed altitude $H$, which is considered as the minimum altitude that is required for terrain or building avoidance.\footnote{The proposed algorithm can also be extended to 3D trajectory optimization, which will become clear later.} Then, the time interval $N$ is discretized into $T$ equally spaced time slots, i.e., $N=T\delta_t$, where
$\delta_t$ denotes the elemental slot length that is chosen to be sufficiently small. Thus, the time-varying trajectory of the UAV $(x(t),y(t),H)$ over the considered time period can be approximated by the $T$-length sequence $(x[i], y[i], H)$, $i \in \mathcal{T} \triangleq \{1, \cdots, T\}$, where $(x[i], y[i])$ denotes the UAV’s $x-y$ coordinate at time slot $i$. Furthermore, let $(x_1,y_1)$ and $(x_T,y_T)$ denote the initial and final locations of the UAV and let $V_{\textrm{max}}$ denote the maximum UAV speed, then we have the following mobility constraints:
\begin{equation} \label{start_destination}
x[1]=x_1,\; y[0]=y_1,\;x[T]=x_T,\;y[T]=y_T,
\end{equation}
\begin{equation}
\label{velocity}
\sqrt{(x[i]-x[i+1])^{2}+(y[i]-y[i+1])^{2}}\leq V_{\textrm{max}}.
\end{equation}
Besides, the UAV's mobility is also constrained by its energy budget. Specifically, the energy consumed by the UAV engine at time slot $i$ is in proportion to the square of the velocity at this time slot and according to \eqref{velocity}, the energy consumed by the UAV engine at time slot $i$, denoted as $E_{\textrm{mov}}[i]$, can be expressed as \cite{JeongTVT2018, Xue2014}
\begin{equation}
\label{vp}
E_{\textrm{mov}}[i]=\kappa ((x[i]-x[i+1])^{2}+(y[i]-y[i+1])^{2}),
\end{equation}
where we have $\kappa = 0.5M \delta_t$ and $M$ denotes the UAV’s mass, including its payload. Thus, we have the following energy constraint for the mobility of the UAV:
\begin{equation}
\label{mobilitypower}
\sum_{i=1}^{T-1}E_{\textrm{mov}}[i]\leq E_{\textrm{tr}},
\end{equation}
where $E_{\textrm{tr}}$ is the total mobility energy stored at the UAV, i.e., the UAV’s energy budget.

\begin{figure}[htbp]
\centering
\includegraphics[scale=0.7]{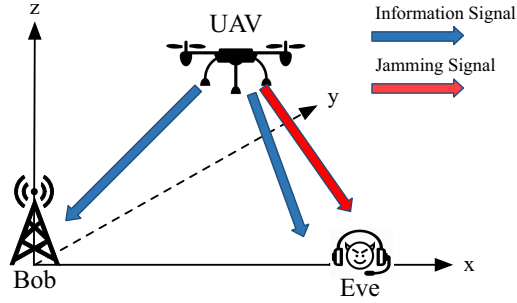}
\caption{The considered UAV-enabled secure communication system.} \vspace{-1em}
\label{SystemFigure}
\end{figure}

We assume that the LoS components dominate the channels of the UAV-Bob and UAV-Eve links, thus the channel power gains of these two links at time slot $i$ follow the frees-pace path loss model given by \cite{Zhang2019Secure, Zhao2020UAV}
\begin{equation}
g_{\textrm{I}}[i]={\gamma_{0}}/{d_{\textrm{I}}^{2}[i]}, \;
g_{\textrm{E}}[i]={\gamma_{0}}/{d_{\textrm{E}}^{2}[i]},
\end{equation}
where $\gamma_{0}$ is the power gain at the reference distance of $1$ m which depends on the carrier frequency and the antenna gains at the transmitter and receiver, $d_{\textrm{I}}[i]$ and $d_{\textrm{E}}[i]$ denote the distances from the UAV to Bob and Eve at time slot $i$, respectively, which can be expressed as
\begin{equation}
d_{\textrm{I}}[i]=\sqrt{x^2[i]+y^2[i]+H^{2}},\;
d_{\textrm{E}}[i]=\sqrt{(x[i]-L)^{2}+y^2[i]+H^{2}}.
\end{equation}
Let $p[i]$ denote the transmit power of the UAV at time slot $i$, we divide it into two parts where a portion of $p[i]\rho[i]$ is used for information transmission and the other $p[i](1-\rho[i])$ is utilized for transmitting AN to block Eve from successfully recovering the confidential information, where $\rho[i]$ is the power splitting ratio which satisfies
\begin{equation}
\label{ratio}
0\leq \rho[i]\leq 1.
\end{equation}
Note that the AN can be eliminated by Bob but not necessarily by Eve \cite{Goel2008}. The transmit power levels $\{p[i]\}$ are constrained by the limitation of both average power and peak power, which can be expressed as follows:
\begin{equation}
\label{avpower}
\frac{1}{T}\sum_{i=1}^{T}p[i]\leq\bar{P},
\end{equation}
\begin{equation}
\label{peakpower}
0\leq p[i]\leq P_{\textrm{max}},
\end{equation}
where $\bar{P}$ and $P_{\textrm{max}}$ denote the average and peak power budgets, respectively. Equivalently, the average power constraint \eqref{avpower} can be rewritten as
\begin{equation}
\label{totalpower}
\sum_{i=1}^{T}p[i]\leq P,
\end{equation}
where $P=T\bar{P}$ represents the total power available during the whole flight. Then, the signal-to-noise ratio (SNR) of the UAV-Bob link at time slot $i$ is given by
\begin{equation} \label{snri}
\textrm{SNR}_{\textrm{I}}[i] \triangleq {\gamma_{0}p[i]\rho[i]}/({d_{\textrm{I}}^2[i]\sigma^{2}}),
\end{equation}
where $\sigma^{2}$ is the additive white Gaussian noise (AWGN) variance at the receiver of Bob. Similarly, the signal-to-interference-plus-noise ratio (SINR) of the UAV-Eve link at
time slot $i$ can be expressed as
\begin{equation} \label{sinre}
\begin{aligned}
\textrm{SINR}_{\textrm{E}}[i] \triangleq  \frac{\gamma_{0}p[i]\rho[i]}{d_{\textrm{E}}^2[i] \left(\frac{\gamma_{0}(1-\rho[i])p[i]}{d_{\textrm{E}}^2[i]}+\sigma^{2}\right)} 
 = \frac{\gamma_{0}p[i]\rho[i]}{\gamma_{0}(1-\rho[i])p[i]+\sigma^{2}d_{\textrm{E}}^2[i]}.
\end{aligned}
\end{equation}
Based on \eqref{snri} and \eqref{sinre}, the secrecy rate of the UAV-Bob link at time slot $i$ is given by \cite{Gopala2008}
\begin{equation}
\label{of}
R_{\textrm{s}}[i]\triangleq [\log(1+\textrm{SNR}_{\textrm{I}}[i])-\log(1+\textrm{SINR}_{\textrm{E}}[i])]^{+},
\end{equation}
and the average secrecy rate can be written as $R_{\textrm{as}} (\{x[i],y[i],p[i],\rho[i] \})\triangleq \frac{1}{T}\sum_{i=1}^T R_{\textrm{s}}[i]$.

\vspace{-1em}
\subsection{Problem Formulation}
To this end, our objective is to maximize the average secrecy rate $R_{\textrm{as}}$ subject to the UAV’s mobility constraints in \eqref{start_destination}, \eqref{velocity} and \eqref{mobilitypower}, and the average and peak transmit power constraints in \eqref{totalpower} and \eqref{peakpower}. Therefore, we can formulate the following optimization problem:
\begin{equation}
\label{oproblem}
\begin{aligned}
	\max\limits_{\{x[i],\;y[i],\;p[i],\;\rho[i]\}}\; R_{\textrm{as}} (\{x[i],y[i],p[i],\rho[i] \})\quad
	\textrm{s.t.}\; \eqref{start_destination},\; \eqref{velocity},\; \eqref{mobilitypower},  \;\eqref{ratio},\; \eqref{peakpower}\;\textrm{and} \;\eqref{totalpower},
\end{aligned}
\end{equation}
where the optimization variables include the UAV's trajectory $\{x[i],y[i]\}$, the transmit power levels $\{p[i]\}$ and the power splitting ratios $\{ \rho[i]\}$.

Problem \eqref{oproblem} is difficult to address due to the following two reasons. First, the operator $[\cdot]^{+}$ makes the objective function of problem \eqref{oproblem} non-smooth. Second, the variables $\{p[i], \rho[i], x[i], y[i]\} $ are tightly coupled in the objective function, which makes problem \eqref{oproblem} highly non-convex. Besides, even with fixed trajectory $\{x[i],y[i]\}$ and without $[\cdot]^{+}$, the variables $\{p[i]\}$ and $\{\rho[i]\}$ are still coupled in the objective function, therefore problem \eqref{oproblem} is potentially more complex than the one considered in \cite{Zhang2019Secure}. In the next section, instead of using the existing convex solvers such as CVX \cite{cvx}, we exploit the special
structure of problem \eqref{oproblem} and propose an efficient algorithm to tackle it with low complexity by blending the benefits of the BCD method, the CCCP method and the ADMM method.

\vspace{-1em}
\section{Proposed Low-Complexity Algorithm} \label{Section3:cccpao}
First, in order to handle the non-smoothness of the objective function of \eqref{oproblem}, we can simply ignore the operator $[\cdot]^{+}$ in the objective function since if the secrecy rate is negative at an arbitrary time slot, say $l$, we can always let the corresponding transmit power $p[l]$ be $0$ such that $R_{\textrm{s}}[l]=0$ is satisfied. Therefore, ignoring the operator $[\cdot]^{+}$ causes no loss of optimality for problem \eqref{oproblem}, and we can obtain the following equivalent problem:
\begin{equation}
\label{traproblem}
\begin{aligned}
	\max\limits_{\{x[i],\;y[i],\;p[i],\;\rho[i] \}}\;  \bar{R}_{\textrm{as}} (\{x[i],y[i],p[i],\rho[i] \})\quad
	\textrm{s.t.} \; \eqref{start_destination},\; \eqref{velocity},\; \eqref{mobilitypower},  \;\eqref{ratio},\; \eqref{peakpower}\;\textrm{and} \;\eqref{totalpower},
\end{aligned}
\end{equation}
where
\begin{equation}
\label{objfun}
\begin{aligned}
\bar{R}_{\textrm{as}} (\{x[i],y[i],p[i],\rho[i] \}) \triangleq & \frac{1}{T}\sum\limits_{i=1}^{T}\Big(\log\Big(1+\frac{\gamma_{0}p[i]\rho[i]}{d_{\textrm{I}}^2[i]\sigma^{2}}\Big) \\
& -\log\Big(1+\frac{\gamma_{0}p[i]\rho[i]}{\gamma_{0}(1-\rho[i])p[i]+\sigma^{2}d_{\textrm{E}}^2[i]}\Big)\Big).
\end{aligned}
\end{equation}

Then, it can be observed that the constraints of problem \eqref{traproblem} are all convex, and the optimization variables are only coupled in the objective function. Thus, we can apply the BCD method to solve this problem by dividing the optimization variables into two blocks (i.e., $\{p[i],\rho[i] \}$ and $\{x[i], y[i]\}$) and optimizing them in an alternative manner. Specifically, with fixed trajectory, the power allocation subproblem can be expressed as
\begin{equation}
\label{pap}
\begin{aligned}
\max\limits_{\{p[i],\;\rho[i]\}}\;  \bar{R}_{\textrm{as}}(\{ p[i]\},\{\rho[i] \})\quad
\textrm{s.t.} \; \eqref{ratio},\; \eqref{peakpower}\;\textrm{and}\; \eqref{totalpower},
\end{aligned}
\end{equation}
while by fixing the transmit power levels and power splitting ratios, the trajectory optimization subproblem can be written as
\begin{equation}
\label{t_optimization_problem}
\begin{aligned}
\max\limits_{\{x[i],\;y[i]\}}\; \bar{R}_{\textrm{as}}(\{x[i],y[i] \})\quad
\textrm{s.t.} \; \eqref{start_destination},\; \eqref{velocity}\;\textrm{and}\; \eqref{mobilitypower}.
\end{aligned}
\end{equation}
In other words, we can solve problem \eqref{traproblem} by solving subproblems \eqref{pap} and \eqref{t_optimization_problem} iteratively, which will be elaborated in the following two
subsections.

\vspace{-1em}
\subsection{Solving the Power Allocation Subproblem} \label{subsection_pa}
In this subsection, we focus on problem \eqref{pap} and propose to first convert it into a convex problem through proper transformation and approximation. Then, an efficient algorithm is presented to solve the resulting convex problem by employing the Lagrange duality method, where the basic idea is to build some complicated constraints into objective functions and then solve the dual problem instead of the original problem.

To proceed, we introduce two groups of auxiliary variables $a[i]$ and $b[i]$, which satisfy
\begin{equation} \label{auxiliary}
	a[i]=p[i]\rho[i],\; b[i]=p[i](1-\rho[i]).
\end{equation}
As a result, problem \eqref{pap} can be equivalently reformulated as
\begin{subequations}
\label{tstpap}
\begin{align}
\max\limits_{\{a[i],\;b[i]\} }\; & \sum_{i=1}^Tg_i(a[i],b[i]) \label{obj_tstpap}\\
\textrm{s.t.}  \; &  a[i]+b[i]\leq P_{\textrm{max}},\;
	a[i]\geq 0,\;b[i]\geq 0, \;\forall i,\label{tstpapc1}\\
	&\sum_{i=1}^T(a[i]+b[i])\leq P,\label{tstpapc2}
\end{align}
\end{subequations}
where 
\begin{equation}
 g_i(a[i],b[i]) \triangleq \log\left(1+\frac{\gamma_{0}a[i]}{d_{\textrm{I}}^2[i]\sigma^{2}}\right)  -\log\left(1+\frac{\gamma_{0}a[i]}{\gamma_{0}b[i]+\sigma^{2}d_{\textrm{E}}^2[i]}\right).
\end{equation}
Although problem \eqref{tstpap} is much simplified as compared with problem \eqref{pap}, it is still a non-convex problem which cannot be solved efficiently in general. However, it can be readily seen that $g_i(a[i],b[i])$ can be viewed as the subtraction of two concave terms, i.e., $ \log\left(1+\frac{\gamma_{0}a[i]}{d_{\textrm{I}}^2[i]\sigma^{2}}\right)
+ \log\left(\gamma_{0}b[i]+\sigma^{2}d_{\textrm{E}}^2[i]\right)$ and $\log\left(\gamma_{0}b[i]+\sigma^{2}d_{\textrm{E}}^2[i]+\gamma_{0}a[i]\right) $, or equivalently, \eqref{obj_tstpap} can be expressed in a DC form. Therefore, by employing the CCCP method\cite{CCCP2009, Zhao2017TCOM, Zhao2017TWC}, the lower bound of \eqref{obj_tstpap} can be obtained as
\begin{equation} \label{appro}
\sum\limits_{i=1}^T g_i(a[i],b[i]) \geq \sum\limits_{i=1}^T \hat{g}_i(a[i],b[i];a_f[i],b_f[i]),
\end{equation}
where $\{a_f[i],b_f[i]\}$ is the given feasible solution of problem \eqref{pap}\footnote{In the following, the subscript $f$ is used to denote the feasible variable obtained in the previous BCD iteration.} and
\begin{equation} \label{xxx}
\begin{aligned}
	\hat{g}_i(a[i],b[i];a_f[i],b_f[i])\triangleq  \log\left(1+\frac{\gamma_{0}a[i]}{d_{\textrm{I}}^2[i]\sigma^{2}}\right)
   	- \log\left(\gamma_{0}b_f[i]+\sigma^{2}d_{\textrm{E}}^2[i]+\gamma_{0}a_f[i]\right) \\
   	+ \log\left(\gamma_{0}b[i]+\sigma^{2}d_{\textrm{E}}^2[i]\right)
 	- \frac{\gamma_0}{\gamma_{0}b_f[i]+\sigma^{2}d_{\textrm{E}}^2[i]+\gamma_{0}a_f[i]}(a[i]-a_f[i]+b[i]-b_f[i]).
\end{aligned}
\end{equation}
Note that the equality in \eqref{appro} holds when $a[i]=a_f[i]$ and $b[i]=b_f[i]$. Consequently, problem \eqref{tstpap} can be approximated by the following convex problem:
\begin{equation}
\label{pap1}
\begin{aligned}
	\max\limits_{\{a[i],\;b[i] \}}\; \sum\limits_{i=1}^T	\hat{g}_i(a[i],b[i];a_f[i],b_f[i])\quad
	\textrm{s.t.}  \; \eqref{tstpapc1} \; \textrm{and}\; \eqref{tstpapc2}.
\end{aligned}
\end{equation}

Then, we note that without the total power constraint \eqref{tstpapc2}, the other constraints in problem \eqref{pap1} are separable over different time slots $i \in \mathcal{T}$. Inspired by this observation, we introduce a Lagrange multiplier (dual variable) $\lambda \geq 0$ to \eqref{tstpapc2} and define the partial Lagrangian associated with problem \eqref{pap1} as \cite{ConvexOptimization}
\begin{equation} \label{patial_lagrangian}
\begin{aligned}
\mathcal{L}(\{a[i]\},\{b[i]\},\lambda)=  \sum\limits_{i=1}^T	\hat{g}_i(a[i],b[i];a_f[i],b_f[i]) -\lambda\sum\limits_{i=1}^{T}(a[i]+b[i])+\lambda P.
\end{aligned}
\end{equation}
With \eqref{patial_lagrangian}, the dual function, denoted by $d(\lambda)$, can be written as \cite{ConvexOptimization}
\begin{equation}
\label{tpap}
\begin{aligned}
d(\lambda) \triangleq  \max\limits_{\{a[i],\;b[i]\}}\; \mathcal{L}(\{a[i]\},\{b[i]\},\lambda) \quad
\textrm{s.t.} \;  \eqref{tstpapc1}.
\end{aligned}
\end{equation}
Let $\{a[i](\lambda)\}$ and $\{b[i](\lambda)\}$ denote an optimal solution of problem \eqref{tpap} with fixed $\lambda$.  It is not difficult to see that, if $\{a[i](0),b[i](0)\}$ satisfy the total
power constraint \eqref{tstpapc2}, then $\{a[i](0),b[i](0)\}$ is optimal for problem \eqref{pap1}, since when $\lambda =0$, problem \eqref{tpap} becomes a relaxed version of problem \eqref{pap1} without the total power constraint \eqref{tstpapc2} and if \eqref{tstpapc2} is automatically satisfied in this case, the only possibility is that $\{a[i](0),b[i](0)\}$ is optimal. Otherwise, we need to increase $\lambda$ to enhance the dominance of $-\lambda\sum_{i=1}^{T}(a[i]+b[i])+\lambda P$ in $\mathcal{L}(\{a[i]\},\{b[i]\},\lambda)$ and force $\{a[i](\lambda),b[i](\lambda)\}$ to satisfy \eqref{tstpapc2}.

Since problem \eqref{pap1} is convex and strong duality \cite{ConvexOptimization} holds, we have $p^{\textrm{opt}} = d(\lambda^{\textrm{opt}}) \leq d(\lambda)$ for any $\lambda \geq 0$,
where $p^{\textrm{opt}}$ is the optimal objective value of problem \eqref{pap1} and $\lambda^{\textrm{opt}}$ denotes the optimal dual variable. Hence, in order to solve problem \eqref{pap1}, we can instead solve the following dual problem:
\begin{equation} \label{dual_problem}
\min\limits_{\lambda \geq 0}\; d(\lambda).
\end{equation}
Since $d(\lambda)$ is a convex function with respect to $\lambda$ and $P -\sum_{i=1}^{T}(a[i]+b[i])$ is a subgradient of $d(\lambda)$ \cite[pp. 12]{boyd2007notes}, we can infer that if $\{
a[i](\lambda^{\textrm{opt}}),b[i](\lambda^{\textrm{opt}})\} $ satisfies \eqref{tstpapc2} and $\lambda^{\textrm{opt}}\big(\sum_{i=1}^T
(a[i](\lambda^{\textrm{opt}})+b[i](\lambda^{\textrm{opt}}))- P\big) = 0$, then $\{ a[i](\lambda^{\textrm{opt}}),b[i](\lambda^{\textrm{opt}})\}$ is an optimal solution of problem \eqref{pap1}.

To this end, our main focus is on solving the dual problem \eqref{dual_problem} and this can be conducted by using the Bisection method \cite{ConvexOptimization} with the aid of the subgradient $P -\sum_{i=1}^{T}(a[i]+b[i])$. We summarize the proposed Lagrange duality method in Algorithm \ref{tab:table2}, where Steps 1-4 check whether or not $\{a[i](0),b[i](0)\}$ is the optimal solution, Steps 5-15 represent the Bisection method to solve the dual problem \eqref{dual_problem} globally. Note that in Steps 10-14, we increase $\lambda$ when the subgradient $P -\sum_{i=1}^{T}(a[i]+b[i])$ is positive and decrease $\lambda$ otherwise, so as to find the optimal dual variable.  In the following, we show that problem \eqref{tpap} can be solved globally in closed-form with given $\lambda$.

\begin{algorithm}[!h] \small
	\caption{Proposed Algorithm for Solving Problem \eqref{pap1}} 
\begin{algorithmic}[1] \label{tab:table2}
			\STATE Let $\lambda \gets 0$ and solve problem \eqref{tpap} to obtain $\{a[i](0),b[i](0)\}$.
			\IF {$\sum_{i=1}^{T}(a[i](0)+b[i](0)) \leq P$}
			\STATE \textbf{output} $\{a[i](0), b[i](0)\}$ and exit the algorithm.
			\ENDIF
				\STATE $\lambda_l \gets 0$, find $\lambda_r$ such that that $\sum_{i=1}^{T}(a[i](\lambda_r)+b[i](\lambda_r)) \leq P$.
			\REPEAT
			\STATE $\lambda \gets ({\lambda_l+\lambda_r})/{2}$.
			\STATE Obtain $\{a[i](\lambda),b[i](\lambda)\}$ by solving problem \eqref{tpap}.
			\STATE \textbf{if} {$\sum_{i=1}^{T}(a[i](\lambda)+b[i](\lambda)) < P$} \textbf{then} $\lambda_r \gets \lambda$, \textbf{else} $\lambda_l \gets \lambda$. \textbf{end if}
			\UNTIL{$|\sum_{i=1}^{T}(a[i](\lambda)+b[i](\lambda)) - P|$ is less than a certain threshold.}
			\STATE \textbf{output} ($\{a[i](\lambda), b[i](\lambda)\}$.
	\end{algorithmic}
\end{algorithm}

It is readily seen that problem \eqref{tpap} can be divided into $T$ independent subproblems for each time slot $i$. Since each subproblem can be solved similarly, we only need to focus on one particular subproblem, and the corresponding optimization problem can be expressed as (the time slot index is omitted here for simplicity)
\begin{equation}\label{stpap}
\begin{aligned}
	\max\limits_{a,\;b}\;  \tilde{g}(a,b) \quad 
	\textrm{s.t.} \;  a+b\leq P_{\textrm{max}},\;
	a\geq 0,\;b\geq 0,
\end{aligned}
\end{equation}
where $\tilde{g}(a,b)\triangleq \hat{g}(a,b;a_f,b_f)-\lambda(a+b)$. It can be observed that problem \eqref{stpap} is convex and there are only two optimization variables.  With fixed $a$, $ \tilde{g}(a,b)$ is a strictly concave function with respect to $b$ since $\log(1+x)$ ($x\geq 0$) is strictly concave. In what follows, we show how problem \eqref{stpap} can be efficiently solved with low complexity.

First, we recast problem \eqref{stpap} as the following equivalent two-tier maximization problem:
\begin{equation} \label{two_tier_problem}
\max\limits_{0\leq a \leq P_{\textrm{max}}-b}\; \max\limits_{0\leq b\leq  P_{\textrm{max}}-a}\; \tilde{g}(a,b).
\end{equation}
For given $a$, the optimal $b$ (it is unique since $\tilde{g}(a,b)$ is strictly concave with fixed $a$), denoted as $\bar{b}(a)$, can be obtained by resorting to the first-order optimality condition of the
inner maximization problem, i.e.,
\begin{equation}
	\frac{d \tilde{g}(a,b)}{d b}=\frac{\gamma_0}{\gamma_0b+\sigma^2 d_{\textrm{E}}^2} - \frac{\gamma_0}{\gamma_0b_f+\sigma^2d_{\textrm{E}}^2+\gamma_0a_f} -\lambda=0,
\end{equation}
and we can obtain the stationary point of $\tilde{g}(a,b)$ as $b_s={1}/{C_b} - {\sigma^2d_{\textrm{E}}^2}/{\gamma_0}$, where $C_b\triangleq \lambda + \frac{\gamma_0}{\gamma_0b_f+\sigma^2d_{\textrm{E}}^2+\gamma_0a_f}$.

Since the inner maximization problem is a univariate convex problem with a bound constraint, its optimal objective value must be attained either on the boundary of the constraint or at the stationary point $b_s$. To be specific, the optimal solution of the inner maximization problem can be obtained by
\begin{equation}
\label{bopt}
	\bar{b}(a)=\left\{
	\begin{array}{l}
	 0 ,\; \textrm{if}\; b_s\leq 0,\\
	 b_s ,\; \textrm{if }\; 0< b_s< P_{\textrm{max}}-a,\\
	 P_{\textrm{max}}-a,\;\textrm{otherwise}.
	\end{array}
	\right.
\end{equation}
Substituting $ \bar{b}(a)$ into the objective function of the outer maximization problem of \eqref{two_tier_problem}, it can be recast as follows with $a$ as the only variable:
\begin{equation}
\label{atstpap}
\begin{array}{l}
	\max\limits_{a}\;\bar{g}(a)\quad
	\textrm{s.t.}   \; a+\bar{b}(a)\leq P_{\textrm{max}},\; a\geq 0,
\end{array}
\end{equation}
where $\bar{g}(a) \triangleq \tilde{g}(a,\bar{b}(a))$.  As discussed above, for a univariate optimization problem with a bound constraint, the optimal objective value must be attained at either the endpoints of the bound interval or some feasible stationary point of the objective function. Accordingly, the optimal value of problem \eqref{atstpap} must be attained either at the point that satisfies $\frac{d\bar{g}(a)}{da}=0 \;( 0 < a < P_{\textrm{max}}-\bar{b}(a))$, or $a\in \{0, P_{\textrm{max}}-\bar{b}(a)\}$.  Therefore, our basic idea to solve problem \eqref{atstpap} is to search over all stationary points and boundary points and then choose the one that achieves the maximum objective value.

Next, we solve problem \eqref{atstpap} by considering the above mentioned two cases. By taking the derivative of $\bar{g}(a)$ with respect to $a$, we have
\begin{equation}
\label{daobj}
\begin{aligned}
\frac{d\bar{g}(a)}{da}= \frac{\gamma_0}{\gamma_0a+\sigma^2d_{\textrm{I}}^2} + \frac{\gamma_0\frac{d\bar{b}(a)}{da}}{\gamma_0\bar{b}(a)+\sigma^2d_{\textrm{E}}^2} - \Big(\frac{\gamma_0}{\gamma_0b_f+\sigma^2d_{\textrm{E}}^2+\gamma_0a_f}+\lambda\Big)(1+\frac{d\bar{b}(a)}{da}).\\
\end{aligned}
\end{equation}
\subsubsection{\textbf{Case I} ($0 < a <  P_{\textrm{max}}-\bar{b}(a)$)} According to \eqref{bopt},  we need to further consider the following two cases: $\bar{b}(a)=b_s$ or $\bar{b}(a)=0$. For both cases, we have $\frac{d\bar{b}(a)}{da}=0$. By plugging $\frac{d\bar{b}(a)}{da}=0$ into \eqref{daobj} and letting \eqref{daobj} equal to $0$, we have
$\frac{\gamma_0}{\gamma_0a+\sigma^2d_{\textrm{I}}^2} =
\frac{\gamma_0}{\gamma_0b_f+\sigma^2d_{\textrm{E}}^2+\gamma_0a_f}+\lambda$. Accordingly,
$a$ can be obtained by
\begin{equation}
	 a=\frac{\gamma_0b_f+\sigma^2d_{\textrm{E}}^2+\gamma_0a_f}{{\gamma_0 + \lambda(\gamma_0b_f+\sigma^2d_{\textrm{E}}^2+\gamma_0a_f)}}-\frac{d_{\textrm{I}}^2\sigma^2}{\gamma_0}.
\end{equation}

\subsubsection{\textbf{Case II} ($a\in \{0, P_{\textrm{max}}-\bar{b}(a)\}$)} In this case, $a$ can take on two possible values, i.e., $a=0$ or $a=P_{\textrm{max}}-\bar{b}(a)$. If $a=0$, we have $b=\bar{b}(0)$, otherwise, if $a=P_{\textrm{max}}-\bar{b}(a)$, this implies that $\bar{b}(a)=P_{\textrm{max}}-a$ and $\frac{d\bar{b}(a)}{da}=-1$. Consequently, we have
$\frac{d\bar{g}(a)}{da}=\frac{\gamma_0}{d_{\textrm{I}}^2\sigma^2+\gamma_0a} -\frac{\gamma_0}{d_{\textrm{E}}^2\sigma^2+\gamma_0(P_{\textrm{max}}-a)}=0$, which can be further simplified to a linear equation and its solution can be easily obtained by
\begin{equation}
 a={(\sigma^2(d_{\textrm{E}}^2-d_{\textrm{I}}^2)+\gamma_0P_{\textrm{max}})}/{(2\gamma_0)}.
\end{equation}
Then, by checking the abovementioned four sub-cases and discarding those do not satisfy the case conditions $0 < a < P_{\textrm{max}}-\bar{b}(a)$ or $a\in \{0, P_{\textrm{max}}-\bar{b}(a)\}$, we can obtain several feasible solutions of problem \eqref{stpap}. Consequently, problem \eqref{stpap} can be globally solved in closed-form by choosing the feasible solution that achieves the maximum objective value.

Together with Algorithm \ref{tab:table2}, the approximated power allocation subproblem \eqref{pap1} can be efficiently solved and with the optimized $\{a[i] ,b[i]\}$, we can easily obtain $\{p[i],\rho[i] \}$ according to \eqref{auxiliary}.

\vspace{-1em}
\subsection{Solving the Trajectory Optimization Subproblem}
In this subsection, we focus on solving the trajectory optimization subproblem \eqref{t_optimization_problem} with fixed $\{p[i], \rho[i] \}$. Note that although the constraints of problem \eqref{t_optimization_problem} are convex, its objective function is non-concave with respect to $\{x[i], y[i]\}$ and it cannot be solved optimally in general. In order to resolve the difficulty caused by the non-concave objective function, we introduce two sets of auxiliary variables  $\{u[i]\}$ and $\{t[i]\}$, which satisfy
\begin{equation} \label{uc}
	u[i]\geq x^{2}[i]+y^{2}[i]+H^{2},
\end{equation}
\begin{equation} \label{tc}
	t[i]\leq (x[i]-L)^{2}+y^{2}[i]+H^{2}.
\end{equation}
As a result, we have the following equivalent optimization problem:
\begin{equation}
\label{tproblem_equi}
\begin{aligned}
\max\limits_{\{x[i],\;y[i],\;u[i],\;t[i]\}}\;  \tilde{R}_{\textrm{as}}(u[i],t[i])\quad
\textrm{s.t.} \; \eqref{start_destination},\; \eqref{velocity},\; \eqref{mobilitypower},\; \eqref{uc}\;\textrm{and}\; \eqref{tc},
\end{aligned}
\end{equation}
where 
\begin{equation}
\label{TTASR}
\begin{aligned}
\tilde{R}_{\textrm{as}}(u[i],t[i])\triangleq   \sum\limits_{i=1}^{T}\Big(\log\Big(1+\frac{\gamma_{0}p[i]\rho[i]}{u[i]\sigma^{2}}\Big)
 - \log\Big(1+\frac{\gamma_{0}p[i]\rho[i]}{\gamma_{0}(1-\rho[i])p[i]+\sigma^{2}t[i] }\Big)\Big).
\end{aligned}
\end{equation}
We note that constraints \eqref{uc} and \eqref{tc} in problem \eqref{tproblem_equi} must be satisfied with equality at optimality since otherwise, we can always slightly decrease $u[i]$ and increase $t[i]$ such that a larger objective value can be achieved without violating any constraint. Therefore, problem \eqref{t_optimization_problem} and problem \eqref{tproblem_equi} are equivalent.

It can be observed that the term $\log\Big(1+\frac{\gamma_{0}p[i]\rho[i]}{u[i]\sigma^{2}}\Big)$ in $\tilde{R}_{\textrm{as}}(u[i],t[i])$ and the term $(x[i]-L)^{2}+y^{2}[i]$ in \eqref{tc} are convex with respect to $u[i]$ and $\{x[i], y[i]\}$, respectively. Therefore, although $\tilde{R}_{\textrm{as}}(u[i],t[i])$ is non-concave and constraint \eqref{tc} is non-convex, they can be expressed in DC forms and problem \eqref{tproblem_equi} can be addressed by employing the CCCP method.  Specifically, we propose to approximate problem \eqref{tproblem_equi} to a convex one and then present an ADMM-based algorithm to solve it globally. First, the proposed algorithm assumes a given solution $\{x_f[i],y_f[i], u_f[i],t_f[i] \}$ in the previous BCD iteration which is feasible to \eqref{tproblem_equi}. Then, by employing the first-order Taylor approximation, we construct the lower bounds for $ (x[i]-L)^{2}+y^{2}[i]+H^{2}$ and
$\log\left(1+\frac{\gamma_{0}p[i]\rho[i]}{u[i]\sigma^{2}}\right)$ as follows:
\begin{equation}
\begin{aligned}
	 -x_{f}^{2}[i]+ 2x_{f}  [i]x[i]-2x[i]L+L^{2}-y_{f}^{2}[i]
	+2y_{f}[i]y[i]+H^{2}\leq (x[i]-L)^{2}+y^{2}[i]+H^{2},
	\end{aligned}
\end{equation}
\begin{equation}
\begin{aligned}
\log\left(1+\frac{\gamma_{0}p[i]\rho[i]}{u[i]\sigma^{2}}\right)
\geq   \log(1+\frac{\gamma_{0}p[i]\rho[i]}{u_{f}[i]\sigma^{2}})
 -\frac{\gamma_{0}p[i]\rho[i](u[i]-u_{f}[i])}{u_{f}^{2}[i]\sigma^{2}+\gamma_{0}p[i]\rho[i]u_{f}[i]}.
\end{aligned}
\end{equation}
Similarly, we also approximate the second term in \eqref{TTASR}, i.e., $\log\Big(1+\frac{\gamma_{0}p[i]\rho[i]}{\gamma_{0}(1-\rho[i])p[i]+\sigma^{2}t[i] }\Big)$, and obtain the following upper bound:
\begin{equation}\label{lower_bound}
\begin{aligned}
\log\left(1+\frac{\gamma_{0}p[i]\rho[i]}{\gamma_{0}(1-\rho[i])p[i]+\sigma^{2}t[i]}\right) = \log(\gamma_{0}p[i]+\sigma^{2}t[i]) 
-  \log\left(\gamma_{0}(1-\rho[i])p[i]+\sigma^{2}t[i]\right)\\
 \leq \log(\gamma_{0}p[i]+\sigma^{2}t_{f}[i])
+\frac{\sigma^{2}(t[i]-t_{f}[i])}{t_{f}[i]\sigma^{2}+\gamma_{0}p[i]}
-\log(\gamma_{0}p[i](1-\rho[i])+\sigma^{2}t[i]).
\end{aligned}
\end{equation}
Note that although replacing $\log\Big(1+\frac{\gamma_{0}p[i]\rho[i]}{\gamma_{0}(1-\rho[i])p[i]+\sigma^{2}t[i] }\Big)$ by its upper bound in \eqref{lower_bound} is mathematically unnecessary since it is already a convex function, it will be clear later that with this approximation, the resulting problem is easier to handle. Moreover, we will show in the simulation results that even with such additional approximation, the performance achieved by the proposed low-complexity algorithm is similar to that achieved by using the CVX solver. After the above mentioned approximations, it is not difficult to see that the original non-concave objective function $\tilde{R}_{\textrm{as}}(u[i],t[i])$ and non-convex constraint \eqref{tc} in problem \eqref{tproblem_equi} can be approximated by
\begin{equation}
	\label{ctc}
	\begin{aligned}
	t[i]\leq -x_{f}^{2}[i]+2x_{f}[i]x[i]-2x[i]L+L^{2}-y_{f}^{2}[i] +2y_{f}[i]y[i]+H^{2},
	\end{aligned}
\end{equation}
\begin{equation}
\label{LTTASR}
\begin{aligned}
 \check{R}_{\textrm{as}} (u[i],t[i])\triangleq & \sum\limits_{i=1}^T \Big( -\frac{\gamma_{0}p[i]\rho[i](u[i]-u_{f}[i])}{u_{f}^{2}[i]\sigma^{2}+\gamma_{0}p[i]\rho[i]u_{f}[i]}\\
& -\frac{\sigma^{2}(t[i]-t_{f}[i])}{t_{f}[i]\sigma^{2}+\gamma_{0}p[i]} +\log(\gamma_{0}p[i](1-\rho[i])+\sigma^{2}t[i]) \Big),
\end{aligned}
\end{equation}
respectively.\footnote{Note that in \eqref{LTTASR}, some constant terms are ignored for simplicity.} Therefore, problem \eqref{tproblem_equi} can be approximated as the following convex problem:
\begin{equation}
\label{tproblem_appro}
\begin{aligned}
	\max\limits_{\{x[i],\;y[i],\;u[i],\;t[i]\}}\; \check{R}_{\textrm{as}}(u[i],t[i])\quad 
	\textrm{s.t.} \; \eqref{start_destination},\; \eqref{velocity},\; \eqref{mobilitypower},\;\eqref{uc}\;\textrm{and}\; \eqref{ctc}.
\end{aligned}
\end{equation}

Subsequently, we develop a low-complexity ADMM-based algorithm to globally solve problem \eqref{tproblem_appro} efficiently.  By exploiting the special structure of problem \eqref{tproblem_appro}, we show that by tactfully introducing auxiliary variables, it can be efficiently solved and each step in the proposed ADMM method can be carried out in closed-form and in parallel. For completeness, a brief introduction of the ADMM method is provided in Appendix \ref{appendixA}.  It can be seen that problem \eqref{tproblem_appro}
is not in the standard form of problem \eqref{ADMM}, therefore, it is difficult to directly apply the ADMM method.  The main difficulties lie in: 1) how to partition the optimization variables of problem \eqref{tproblem_appro} into two groups, as in the ADMM framework, 2) how to decompose each group problem for much easier implementation. To proceed, we introduce four redundancy copies of the variables $\{x[i], y[i]\}$ to help address the abovementioned difficulties, i.e.,
\begin{subequations} \label{eq_cons}
\begin{align}
x[i]=\bar{x}[i] ,\; y[i]=\bar{y}[i],\;
	x[i]=\tilde{x}[i] ,\; y[i]=\tilde{y}[i],\\
	x[i]=\hat{x}[i] ,\; y[i]=\hat{y}[i], \;
	\hat{x}[i]=\ddot{x}[i] ,\; \hat{y}[i]=\ddot{y}[i].
\end{align}
\end{subequations}
Then, due to the introduction of \eqref{eq_cons}, constraints \eqref{velocity}, \eqref{mobilitypower}, \eqref{uc} and \eqref{ctc} are modified as follows without loss of optimality:
 \begin{equation} \label{velocity2}
 (x[i]-\bar{x}[i+1])^{2}+(y[i]-\bar{y}^{2}[i+1])\leq  V_{\textrm{max}}^2,
 \end{equation}
 \begin{equation} \label{mobilitypower2}
 \sum\limits_{n=1}^{T-1} \big( (\ddot{x}[i]-\ddot{x}[i+1])^2+(\ddot{y}[i]-\ddot{y}[i+1])^2\big) \leq \frac{E_{\textrm{tr}}}{\kappa},
 \end{equation}
 \begin{equation} \label{uc2}
 u[i]\geq \tilde{x}^{2}[i]+\tilde{y}^{2}[i]+H^{2},
 \end{equation}
 \begin{equation}
 \label{ctc2}
 \begin{aligned}
 t[i]\leq -\tilde{x}_{f}^{2}[i]+2\tilde{x}_{f}[i]x[i] -2\tilde{x}[i]L+L^{2}-\tilde{y}_{f}^{2}[i] +2\tilde{y}_{f}[i]\tilde{y}[i]+H^{2}.
 \end{aligned}
 \end{equation}
 
 Next, by dualizing and penalizing the equality constraints in  \eqref{eq_cons} to the objective function, we can obtain the augmented Lagrangian (AL) function of problem \eqref{tproblem_appro}, which is given by
\begin{equation}
	\label{AL}
	\begin{aligned}
	& L_{\delta}(\mathcal{Q},\mathcal{U})= \check{R}_{\textrm{as}}(u[i],t[i])	-\frac{\delta}{2} \sum\limits_{i=1}^T\Big((x[i]-\bar{x}[i]-\frac{\lambda_{x_i}}{\delta})^{2}+(y[i]-\bar{y}[i]-\frac{\lambda_{y_i}}{\delta})^{2}\\	
	& +(x[i]-\tilde{x}[i]-\frac{\eta_{x_i}}{\delta})^{2}+(y[i]-\tilde{y}[i]-\frac{\eta_{y_i}}{\delta})^{2} +(x[i]-\hat{x}[i]-\frac{\omega_{x_i}}{\delta})^{2}+(y[i]-\hat{y}[i]-\frac{\eta_{y_i}}{\delta})^{2}\\
	& +(\hat{x}[i]-\ddot{x}[i]-\frac{\theta_{x_i}}{\delta})^{2}+(\hat{y}[i]-\ddot{y}[i]-\frac{\theta_{y_i}}{\delta})^{2}\Big),
	\end{aligned}
\end{equation}
where $\mathcal{Q} \triangleq \{ x[i],y[i],\bar{x}[i],\bar{y}[i],\hat{x}[i],\hat{y}[i],\tilde{x}[i],\tilde{y}[i],\ddot{x}[i], \ddot{y}[i],$ $u[i],t[i]\}$, $\delta$ is the penalty parameter, $\mathcal{U} \triangleq \{\lambda_{x_i},\lambda_{y_i},\eta_{x_i},\eta_{y_i},\omega_{x_i},\omega_{y_i}, \theta_{x_i},\theta_{y_i}\}$, $\{\lambda_{x_i},\lambda_{y_i}\}$, $\{\eta_{x_i},\eta_{y_i}\}$, $\{\omega_{x_i},\omega_{y_i}\}$ and $\{\theta_{x_i},\theta_{y_i}\}$ are the dual variables associated with the constraints in \eqref{eq_cons}, respectively. Accordingly, we have the following AL problem:
\begin{equation}
\label{tproblem}
\begin{aligned}
	\max\limits_{\mathcal{Q}}\;  L_{\delta}(\mathcal{Q},\mathcal{U})\quad
	\textrm{s.t.} \; \eqref{start_destination},\;\eqref{eq_cons}-\eqref{ctc2}.
\end{aligned}
\end{equation}

To solve problem \eqref{tproblem}, we need to divide the primal variables $\mathcal{Q}$ into two groups (correspond to $\mathbf{x}$ and $\mathbf{z}$ in Appendix \ref{appendixA}). For this purpose and to facilitate parallel implementation, we group the variables $\mathcal{Q}\backslash \{ \ddot{x}[i], \ddot{y}[i]\}$ according to the parity of their corresponding time slot indices, while the variables $\{\ddot{x}[i], \ddot{y}[i]\}$ are handled in one group since they all appear in constraint \eqref{mobilitypower2}. Besides, we also classify these variables into three different types according to the forms of their corresponding optimization subproblems, as shown in Fig. \ref{variable_grouping}. In the following, we elaborate the details on how to solve these subproblems efficiently.
\begin{figure}[htbp]
	\centering
	\includegraphics[scale=0.3]{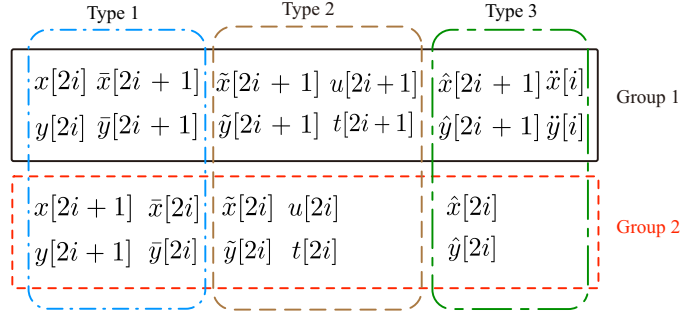}
	\caption{Grouping and classification of the optimization variables.} \vspace{-1em}
	\label{variable_grouping}
\end{figure}

\subsubsection{Group 1}
The Type 1 subproblem is involved with variables $\{x[2i], \bar{x}[2i+1], y[2i], \bar{y}[2i+1]\}$ and the corresponding optimization problem can be expressed as
\begin{equation}
\label{sb1}
\begin{array}{l}
	\max\limits_{x[2i],\; \bar{x}[2i+1],\; y[2i],\; \bar{y}[2i+1]}\; L_{\delta,1} \quad
	\textrm{s.t.} \; (x[2i]-\bar{x}[2i+1])^{2}+(y[2i]-\bar{y}[2i+1])^{2}\leq  V_{\textrm{max}}^2,
\end{array}
\end{equation}
where
\begin{equation}
\begin{aligned}
	L_{\delta,1} \triangleq& -\frac{\delta}{2}\Big(x[2i]-\bar{x}[2i]-\frac{\lambda_{x_{2i}}}{\delta})^{2}+(y[2i]-\bar{y}[2i]-\frac{\lambda_{y_{2i}}}{\delta})^{2}\\
&	+(x[2i]-\tilde{x}[2i]-\frac{\eta_{x_{2i}}}{\delta})^{2}+(y[2i]-\tilde{y}[2i]-\frac{\eta_{y_{2i}}}{\delta})^{2}\\	
& +(x[2i]-\hat{x}[2i]-\frac{\omega_{x_{2i}}}{\delta})^{2}+(y[2i]-\hat{y}[2i]-\frac{\eta_{y_{2i}}}{\delta})^{2}\\
&	+(x[2i+1]-\bar{x}[2i+1]-\frac{\lambda_{x_{2i+1}}}{\delta})^{2} +(y[2i+1]-\bar{y}[2i+1]-\frac{\lambda_{y_{2i+1}}}{\delta})^{2}\Big).
	\end{aligned}
\end{equation}
Problem \eqref{sb1} is a quadratically constrained quadratic programming (QCQP) problem with only one constraint, therefore, it can be globally solved and the detailed derivation of its optimal solution is relegated to Appendix \ref{appendixB}.  Note that for each time slot $i$, the corresponding variables can be optimized in parallel.

The Type 2 subproblem involves the optimization of $\{\tilde{x}[2i+1],\tilde{y}[2i+1],u[2i+1],t[2i+1]\}$, which can be written as
\begin{equation}
\label{sb3}
\begin{array}{l}
\max\limits_{\tilde{x}[2i+1],\; \tilde{y}[2i+1],\; u[2i+1],\; t[2i+1]}\;  L_{\delta,2} \\
\textrm{s.t.} \; u[2i+1]\geq \tilde{x}^2[2i+1]+\tilde{y}^2[2i+1]+H^2,\\
\quad\;\; t[2i+1]\leq -\tilde{x}_f^2[2i+1]+2\tilde{x}_f[2i+1]\tilde{x}[2i+1] +L^2 -2\tilde{x}[2i+1]L-\tilde{y}_f^2[2i+1]\\
\quad\quad\quad\quad \quad \quad\; +2\tilde{y}_f[2i+1]\tilde{y}[2i+1]+H^2,
\end{array}
\end{equation}
where
\begin{equation}
\begin{aligned}
L&_{\delta,2} \triangleq -a[2i+1]u[2i+1]-\frac{\sigma^2t[2i+1]}{t_f[2i+1]+\gamma_0p[2i+1]}\\
&+\log(\gamma_0(1-\rho[2i+1])p[2i+1]+\sigma^2t[2i+1]) -\frac{\delta}{2}\big((x[2i+1]-\tilde{x}[2i+1]-\frac{\omega_{x_{2i+1}}}{\delta})^2\\
&+(y[2i+1]-\tilde{y}[2i+1]-\frac{\omega_{y_{2i+1}}}{\delta})^2\big),
\end{aligned}
\end{equation}
\begin{equation}
a[2i+1]=\frac{\gamma_0p[2i+1]\rho[2i+1]}{u_f^2[2i+1]\sigma^2+\gamma_0p[2i+1]\rho[2i+1]u_f[2i+1]}.
\end{equation}
It can be observed that problem \eqref{sb3} is a QCQP problem with two constraints. Although there is no closed-form solution for such kind of optimization problems in general, we show that it can be efficiently solved in closed-form by exploiting its special structure and the details are provided in Appendix \ref{appendixC}.

The Type 3 subproblem involves the optimization of $\{ \hat{x}[2i+1],\hat{y}[2i+1],\ddot{x}[i],\ddot{y}[i]\}$ and we can obtain the following problem:
\begin{equation}\label{sb21}
\begin{aligned}
\max\limits_{\{\hat{x}[2i+1],\; \hat{y}[2i+1], \;\ddot{x}[i],\; \ddot{y}[i]\}} \;  L_{\delta,3} \quad
\textrm{s.t.}\; \sum\limits_{n=1}^{T-1} \big( (\ddot{x}[i]-\ddot{x}[i+1])^2+(\ddot{y}[i]-\ddot{y}[i+1])^2\big) \leq \frac{E_{\textrm{tr}}}{\kappa},
\end{aligned}
\end{equation}
where
\begin{equation}
\begin{aligned}
L_{\delta,3} \triangleq & -\frac{\delta}{2}\sum\limits_{n=1}^T\Big(\hat{x}[i]-\ddot{x}[i]-\frac{\theta_{x_i}}{\delta})^2+(\hat{y}[i]-\ddot{y}[i]-\frac{\theta_{y_i}}{\delta})^2\\
& +(x[i]-\hat{x}[i]-\frac{\eta_{x_i}}{\delta})^2+(y[i]-\hat{y}[i]-\frac{\eta_{y_i}}{\delta})^2\Big).
\end{aligned}
\end{equation}
Similar to problem \eqref{sb1}, problem \eqref{sb21} is also a convex QCQP problem with only one constraint and strong duality holds for this problem. Therefore, it can be globally solved in closed-form and the details are presented in Appendix \ref{appendixD}.

\subsubsection{Group 2}  The Type 1 subproblem in group 2 can be obtained by changing the time slot indices in problem \eqref{sb1} from $2i$ and $2i+1$ to $2i+1$ and $2i+2$, respectively. Therefore, it can be solved by resorting to Appendix \ref{appendixB}, the details are not shown here for brevity. Similarly, the Type 2 subproblem can be obtained by changing the time slot indices in problem \eqref{sb3} and it can be efficiently solved according to Appendix \ref{appendixC}. Besides, the Type 3 subproblem is given by 
\begin{equation}
\label{sb22}
\begin{array}{l}
	\max\limits_{\{\hat{x}[2i],\; \hat{y}[2i]\}}\;
        L_{\delta,3}. \\
\end{array}
\end{equation}
Since problem \eqref{sb22} is an unconstrained convex problem, its global optimal solution can be easily obtained by (resorting to the first-order optimality condition)
\begin{equation}
\begin{aligned}
\hat{x}^{\textrm{opt}}[2i]=\frac{\ddot{x}[2i]+x[2i]}{2}+\frac{\theta_{x_{2i}}-\eta_{x_{2i}}}{2\delta},\;
\hat{y}^{\textrm{opt}}[2i]=\frac{\ddot{y}[2i]+y[2i]}{2}+\frac{\theta_{y_{2i}}-\eta_{y_{2i}}}{2\delta}.
\end{aligned}
\end{equation}

Finally, the dual variables can be updated by
\begin{equation}
\label{lagrange}
\begin{aligned}	
&\lambda_{x_i}=\lambda_{x_i}+\delta(\bar{x}[i]-x[i]),\; \lambda_{y_i}=\lambda_{y_i}+\delta(\bar{y}[i]-y[i]),\\
&\eta_{x_i}=\eta_{x_i}+\delta(\hat{x}[i]-x[i]),\; \eta_{y_i}=\eta_{y_i}+\delta(\hat{y}[i]-y[i]),\\
& \omega_{x_i}=\omega_{x_i}+\delta(\tilde{x}[i]-x[i]),\;\omega_{y_i}=\omega_{y_i}+\delta(\tilde{y}[i]-y[i]),\\
&\theta_{x_i}=\theta_{x_i}+\delta(\ddot{x}[i]-\hat{x}[i]),\; \theta_{y_i}=\theta_{y_i}+\delta(\ddot{y}[i]-\hat{y}[i]).
\end{aligned}
\end{equation}
Overall, the proposed algorithm to solve problem \eqref{tproblem_appro} is summarized in Algorithm \ref{tradesign}. Note that if the 3D trajectory optimization is considered, we can similarly introduce auxiliary variables for the altitudes of the UAV and solve the resulting subproblems accordingly without much difficulty.

\begin{algorithm}[!h] \small
  \centering
  \caption{Proposed ADMM Method for Problem \eqref{tproblem_appro}}
  \begin{algorithmic}[1] \label{tradesign}
 \STATE Let $\mathcal{U}=\mathbf{0}$, set a threshold $\epsilon$ and the penalty parameter $\delta$.
 \REPEAT 
 \STATE Update the variables in group 1 by solving subproblems \eqref{sb1}, \eqref{sb3} and \eqref{sb21}.
 \STATE Change the time slot indices in subproblems \eqref{sb1} and \eqref{sb3} and update the variables in group 2 by solving subproblems \eqref{sb1}, \eqref{sb3} and \eqref{sb22}.

 \STATE Update the dual variables according to \eqref{lagrange}

 \STATE Calculate the primal residual $\mathbf{r}$ and dual residual $\mathbf{s}$ using \eqref{prdr}.

 \UNTIL{$\max(\|\mathbf{r}\|,\|\mathbf{s}\|)<\epsilon$}.

  \STATE \textbf{output} $\{x[i], y[i]\}$.
 \end{algorithmic}
\end{algorithm}

\vspace{-1em}
\subsection{Overall Algorithm and Analysis} \label{analysis}
To summarize, the proposed algorithm can find a suboptimal solution of problem \eqref{traproblem} by applying the BCD method, i.e., the power allocation subproblem \eqref{pap} and the trajectory optimization subproblem \eqref{t_optimization_problem} are solved alternatively in an iterative manner. The detailed steps of the proposed algorithm are listed in Algorithm \ref{tab:table1}. Furthermore, regarding to the convergence of Algorithm \ref{tab:table1}, we have the following proposition: 
\newtheorem{proposition}{\underline{Proposition}}
\begin{proposition}
\emph{The sequence of the objective values generated by Algorithm \ref{tab:table1} is guaranteed to converge.}
\end{proposition}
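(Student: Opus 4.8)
The plan is to establish convergence through a standard monotonicity-plus-boundedness argument tailored to the BCD structure of Algorithm~\ref{tab:table1}. First I would observe that the objective function $\bar{R}_{\textrm{as}}$ in problem \eqref{traproblem} is continuous on the feasible set, which is compact: the trajectory variables $\{x[i],y[i]\}$ are confined by the mobility constraints \eqref{start_destination}, \eqref{velocity} and \eqref{mobilitypower} to a bounded region (finitely many time slots, each displacement bounded by $V_{\textrm{max}}$ from fixed endpoints), and the power variables $\{p[i],\rho[i]\}$ lie in the box defined by \eqref{ratio}, \eqref{peakpower} and \eqref{totalpower}. Hence $\bar{R}_{\textrm{as}}$ is bounded above on the feasible set, so it suffices to show the sequence of objective values is non-decreasing.

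Next I would verify the monotone ascent property of each BCD half-step. Consider the power allocation step: at the current iterate the CCCP surrogate $\sum_i \hat{g}_i(\cdot;a_f,b_f)$ is a global lower bound on $\sum_i g_i$ that is tight at the previous point (the inequality \eqref{appro} with equality at $a[i]=a_f[i]$, $b[i]=b_f[i]$, as noted after \eqref{xxx}). Algorithm~\ref{tab:table2} solves the surrogate problem \eqref{pap1} to global optimality via the Lagrange-duality/bisection procedure together with the closed-form inner solution, so the new surrogate value is at least the surrogate value at the old point, which equals the true objective at the old point; by the lower-bound property the true objective at the new point is at least the new surrogate value. Chaining these gives $\bar{R}_{\textrm{as}}(\text{new}) \geq \bar{R}_{\textrm{as}}(\text{old})$ for the power step. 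The same reasoning applies to the trajectory step: the approximations leading to \eqref{ctc} and \eqref{LTTASR} are first-order Taylor bounds that are tight at the previous trajectory (a genuine lower bound on the objective and a genuine restriction of the feasible set that contains the previous point), and the ADMM routine in Algorithm~\ref{tradesign} solves the convex surrogate \eqref{tproblem_appro} to global optimality; thus the trajectory step also does not decrease $\bar{R}_{\textrm{as}}$. Combining the two half-steps shows the full sequence of objective values is monotonically non-decreasing, and together with the upper bound from compactness, it converges by the monotone convergence theorem.

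The main obstacle I anticipate is the hidden assumption that the inner solvers actually return global optima of the surrogate subproblems, and in particular that Algorithm~\ref{tradesign}'s ADMM iteration converges to the optimum of \eqref{tproblem_appro}. Since \eqref{tproblem_appro} is a convex problem with the separable structure described (the variable-splitting in \eqref{eq_cons} and the two-block grouping by time-slot parity), one can invoke the classical ADMM convergence result for two-block convex programs: under the stated structure the primal residual $\mathbf{r}$ and dual residual $\mathbf{s}$ vanish and the objective converges to the optimal value. One must be slightly careful that each of the Type~1, Type~2 and Type~3 subproblems is solved exactly (QCQPs with one or two constraints, resolved in closed form in the appendices; the Type~3 group-2 subproblem is unconstrained with an explicit minimizer), so the ADMM updates are well-defined. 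I would also note the subtle point that the CCCP/ADMM inner loop need only be run enough to not decrease the true objective — even a single exact solve of the surrogate suffices for monotonicity — so a careful statement would either assume the inner ADMM is run to convergence or observe that any feasible surrogate-improving inner iterate preserves the ascent property. Modulo these bookkeeping caveats, the argument is the routine "majorize-minimize with compact feasible region" convergence proof, and no further structural insight is required.
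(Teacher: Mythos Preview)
Your proposal is correct and follows essentially the same monotonicity-plus-boundedness argument as the paper: CCCP surrogates are tight lower bounds at the previous iterate, the inner solvers (Algorithm~\ref{tab:table2} and Algorithm~\ref{tradesign}) globally optimize the convex surrogates, hence each BCD step is non-decreasing, and the objective is bounded above. Your treatment is in fact more careful than the paper's, which simply asserts monotonicity and invokes the power constraints \eqref{avpower} and \eqref{peakpower} for the upper bound rather than arguing full compactness.
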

\begin{proof}
Since problems \eqref{pap} and \eqref{t_optimization_problem} are equivalent to problems \eqref{tstpap} and \eqref{tproblem_equi}, respectively, and the latter two can be approximated by problems \eqref{pap1} and \eqref{tproblem_appro} through the first-order approximations, we can infer that the solution obtained in the $(t-1)$-th iteration of Algorithm \ref{tab:table1}, denoted by $\{p^{t-1}[i], \rho^{t-1}[i], x^{t-1}[i], y^{t-1}[i] \}$, is feasible to problem \eqref{traproblem}. Besides, due to the fact that Algorithm
\ref{tab:table2} and Algorithm \ref{tradesign} can obtain the optimal solutions of problems \eqref{pap1} and \eqref{tproblem_appro}, respectively, it can be readily seen that $\bar{R}_{\textrm{as}} (\{x^t[i],y^t[i],p^t[i],\rho^t[i] \})\geq \bar{R}_{\textrm{as}} (\{x^{t-1}[i],y^{t-1}[i],p^{t-1}[i],\rho^{t-1}[i] \})$. Together with the fact that the objective value of problem \eqref{traproblem} is upper bounded by a certain value due to the power constraints \eqref{avpower} and \eqref{peakpower}, we conclude that the sequence $\{\bar{R}_{\textrm{as}} (\{x^t[i],y^t[i],p^t[i],$ $\rho^t[i] \})\}$ guarantees to converge. This completes the proof.
\end{proof}

	\begin{algorithm}[!h] \small 
	\centering
	\caption{Proposed  Algorithm for Problem \eqref{traproblem}}
	\begin{algorithmic}[1] \label{tab:table1}
		\STATE Initialize $\mathcal{Q}$,  $\{p[i], \rho[i]\}$ and set a threshold $\tau$.
		\REPEAT
			\STATE Solve problem \eqref{pap1} using Algorithm \ref{tab:table2} with fixed trajectory and obtain $\{p[i], \rho[i]\}$.
		\STATE Solve problem \eqref{tproblem_appro} using Algorithm \ref{tradesign} with fixed $\{p[i],\rho[i]\}$ and obtain $\mathcal{Q}$.
		\STATE $\mathcal{Q}_f \gets \mathcal{Q}$, $\{p_f[i], \rho_f[i]\} \gets \{p[i], \rho[i]\}$.
		\UNTIL{The fractional increase of the objective value of problem \eqref{traproblem} is below the threshold $\tau$.}
		\STATE \textbf{output} $\{ p[i], \rho[i], x[i], y[i]\}$.
	\end{algorithmic}
\end{algorithm}

Besides, Algorithm \ref{tab:table1} exhibits very low computational complexity and the detailed analysis is presented as follows. As mentioned in Section \ref{subsection_pa}, since the power allocation subproblem is divided into $T$ independent subproblems and each subproblem is solved efficiently in closed-form, the worst-case complexity of Algorithm \ref{tab:table2} is $\mathcal{O}(N_BT)$, where $N_B$ denotes the number of iterations required by the Bisection method. For Algorithm \ref{tradesign}, we can see that its complexity
is dominated by solving problem \eqref{sb21} using Gaussian eliminations.  Since the complexity of solving one instance of problem \eqref{sb21} does not scale with $T$, the complexity of Algorithm \ref{tradesign} can be expressed as $\mathcal{O}(N_AN_BT)$, where $N_A$ denotes the number of ADMM iterations. In summary, the complexity of Algorithm \ref{tab:table1} can be expressed as $\mathcal{O}(N_{\textrm{BCD}}(N_AN_BT+N_BT))$, where $N_{\textrm{BCD}}$ represents the number of BCD iterations. Note that the complexity of the conventional algorithm in \cite{Zhang2019Secure} is $\mathcal{O}(N_{\textrm{BCD}} T^{3.5})$, therefore, the proposed Algorithm \ref{tab:table1} exhibits a much lower
complexity\footnote{Since $T$ is usually on the order of several hundreds, thus $T^{3.5} \gg N_A N_B T$.} and it will be shown in Section \ref{Section6:simulations} that Algorithm \ref{tab:table1} can achieve a similar performance with that of the conventional algorithm using existing convex solvers.

\vspace{-0.6em}
\section{ Simulation results}
\label{Section6:simulations}
In this section, we provide numerical results to evaluate the performance of our proposed low-complexity algorithm (i.e., Algorithm \ref{tab:table1}).  For comparison, we also provide the performance of the following three benchmark schemes:
\begin{itemize}
\item The fixed trajectory (FT) scheme: the transmit power levels and power splitting ratios are jointly optimized, while the UAV is assumed to fly from $(x_1,y_1)$ to $(x_T,y_T)$ straightly at a constant speed.
\item The naive power splitting (NPS) scheme: running Algorithm \ref{tab:table1} with fixed $\rho[i]=0.5,\;\forall i$.
\item The without AN scheme: running Algorithm \ref{tab:table1} with fixed $\rho[i]=1,\;\forall i$.
\end{itemize}
In our simulations, the channel bandwidth, the noise power spectrum and the channel power gain are set to $20$ MHz, $N_0=-169$ dBm/Hz and $\gamma_0=-36$ dB, respectively, and the carrier frequency is set at $5$ GHz. Hence, the reference SNR at a distance of $1$ m is $\frac{\gamma_0}{\sigma^2}=60$ dB.  The nominal system configuration is defined by the following choice of parameters unless otherwise specified: $L=100$ m, $H = 100$ m, $V_{\textrm{max}}=12$ m/s, $M = 4$ kg, $N=125$ s, $\delta_t=0.5$ s, $ (x_1,y_1)=(-200\;\textrm{m}, -150\;\textrm{m})$, $(x_T,y_T) = (1000\; \textrm{m}, -150\;\textrm{m})$, $\bar{P}=0$ dBm and $P_{\textrm{max}}=4\bar{P}$. The total mobility energy stored at the UAV $E_\textrm{tr}$ is set to $19.40$ kJ.

\subsubsection{Convergence property}
First, we illustrate in Fig. \ref{convergence} the convergence of our proposed Algorithm \ref{tradesign} and \ref{tab:table1}. From Fig. \ref{convergence} (a), it is observed that the outer BCD iteration of Algorithm \ref{tab:table1} is monotonically convergent and it needs about $10$ iterations to obtain the steady performance. Besides, in Fig. \ref{convergence} (b) and (c), we plot the primal and dual residuals $\|\mathbf{r}\|$ and $\|\mathbf{s}\|$ versus the number of ADMM iterations in Algorithm \ref{tradesign}. As can be seen, Algorithm \ref{tradesign} can converge well within $2000$ iterations. Although this number is relatively large as compared with the number of outer BCD iterations, the complexity is low since each updating step in Algorithm \ref{tradesign} is very simple, this will be verified in the following results.

\begin{figure}[htbp]
	\setlength{\abovecaptionskip}{-0.05cm}
	\setlength{\belowcaptionskip}{-0.05cm}
	\centering
	\includegraphics[scale=0.48]{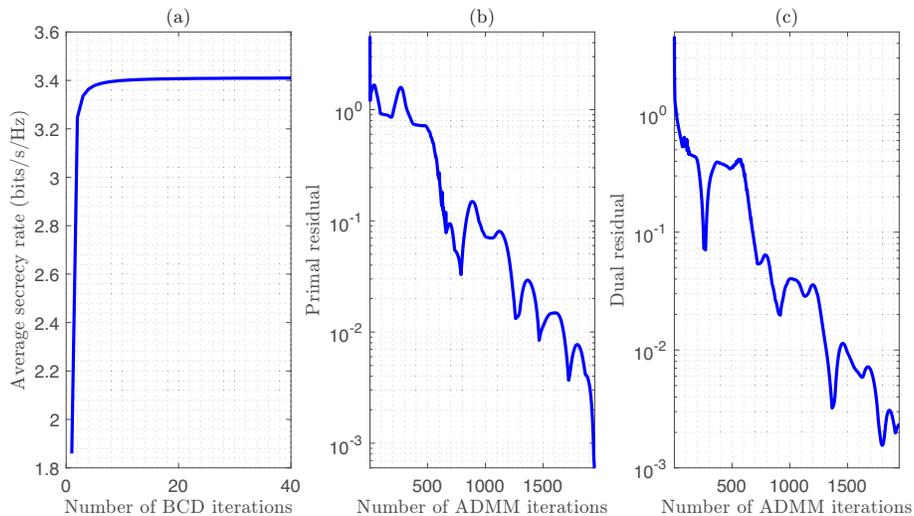}
	\caption{Convergence behavior of the proposed Algorithm \ref{tradesign} and \ref{tab:table1}.} \vspace{-1em}
	\label{convergence}
\end{figure}

\subsubsection{Performance and complexity}
In Fig. \ref{Fig6_running_time} and Table \ref{objvalue1}, we respectively investigate the average running time required by Algorithm \ref{tab:table1} to complete one outer BCD iteration and the achieved objective value (i.e., the average secrecy rate) in bits/s/Hz by Algorithm \ref{tab:table1}. For comparison, we also provide the performance achieved by replacing Algorithm \ref{tab:table2} and \ref{tradesign} in steps 3 and 4 of Algorithm \ref{tab:table1} by using the CVX solver \cite{cvx}. From Fig. \ref{Fig6_running_time}, we observe that the running time required by the proposed algorithm is significantly less than that required by using CVX. The running time increases with the increasing of $T$, however, it increases much
slower for the proposed algorithm. This is consistent with the complexity analysis in Section \ref{analysis} and it shows that the proposed algorithm design is more scalable. Besides, we observe from Table \ref{objvalue1} that the average secrecy rate achieved by the proposed algorithm and that by CVX is almost identical. In certain cases, such as $T=240$, the performance of the proposed algorithm is even better. This is because the CVX solver uses a successive approximation heuristic method to solve convex optimization problems involving $\log(\cdot)$ functions, which may lead to certain performance loss due to precision issues.

\begin{figure}[htbp]
	\setlength{\abovecaptionskip}{-0.05cm}
	\setlength{\belowcaptionskip}{-0.05cm}
	\centering
	\includegraphics[scale=0.48]{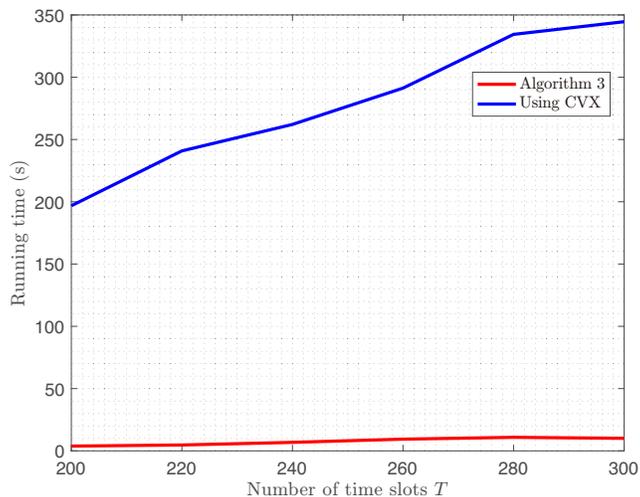}
	\caption{Running time comparison.} \vspace{-1em}
	\label{Fig6_running_time}
\end{figure}

	\begin{table} [htbp]
		\scriptsize
	\centering
	\caption{ Achieved Average Secrecy Rate Comparison} 
	\begin{tabular}{ccccccc}
		\toprule
		& \multicolumn{6}{c}{Numbers of time slots $T$} \\
		\cmidrule{2-7}
		& 200 & 220 & 240 & 260 & 280 & 300 \\
		\midrule
		Using CVX  & 2.2019	& 2.9532 & 3.2711 & 3.5375 & 3.7640 & 3.9580\\
		Algorithm \ref{tab:table1} & 2.2019 & 2.9532 & 3.2721 & 3.5375 & 3.7640 & 3.9580 \\
		\bottomrule
	\end{tabular}
	\label{objvalue1}
\end{table}

\subsubsection{Average secrecy rate versus the Bob-Eve distance $L$}
In Fig. \ref{Fig1_L}, we plot the average secrecy rates achieved by the considered schemes under various values of $L$. First, it is observed that the proposed algorithm achieves the best performance among the considered schemes. Second, the achieved average secrecy rates by all the considered schemes increases with $L$, which is expected since it is more difficult for Eve to intercept the communications between Bob and the UAV when $L$ is large. Similarly, since transmitting AN is less important under larger $L$, the performance of the without AN scheme approaches that of the proposed algorithm with the increasing of $L$. Besides, we observe that the performance of the NPS scheme is better than that of the FT
scheme. This is due to the fact that optimizing the UAV's trajectory under the considered simulation setup enables the UAV to fly close to Bob and away from Eve to achieve higher secrecy rate, while the performance gain offered by optimizing the power splitting ratios $\{\rho[i]\}$ is not that pronounced.

\begin{figure}[!htbp]
	\setlength{\abovecaptionskip}{-0.05cm}
	\setlength{\belowcaptionskip}{-0.05cm}
	\centering
	\includegraphics[scale=0.48]{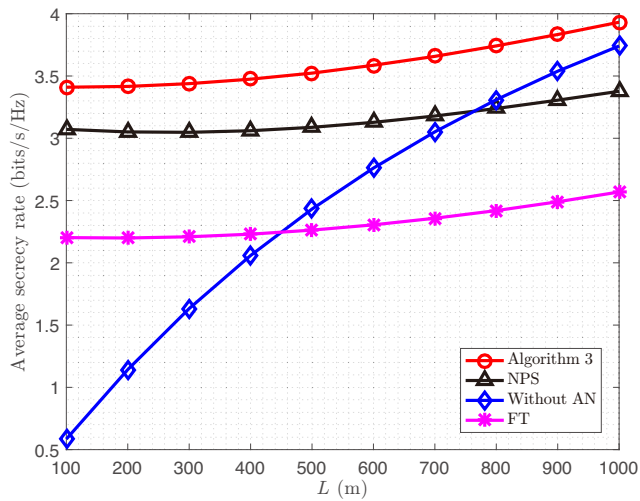}
	\caption{Average secrecy rate versus $L$.} \vspace{-1em}
	\label{Fig1_L}
\end{figure}

\begin{figure}[htbp]
	\setlength{\abovecaptionskip}{-0.05cm}
	\setlength{\belowcaptionskip}{-0.05cm}
	\centering
	\includegraphics[scale=0.48]{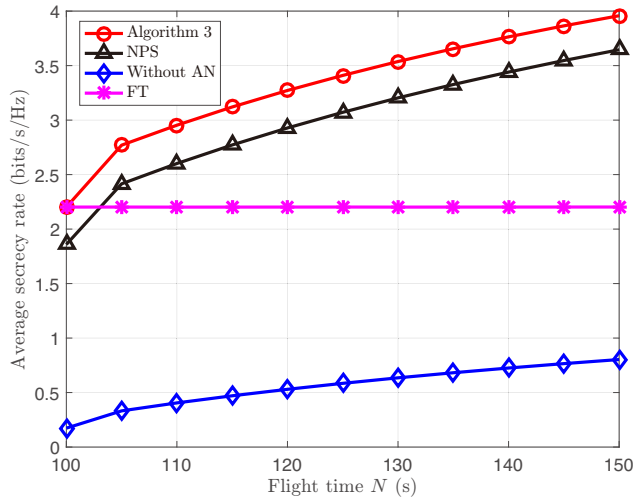}
	\caption{Average secrecy rate versus $N$.} \vspace{-1em}
	\label{SRVSTS}
\end{figure}

\subsubsection{Average secrecy rate versus the total flight time $N$}
In Fig. \ref{SRVSTS}, we investigate the average secrecy rate versus various values of $N$. As can be seen, the performance of all the considered schemes improves with the increasing of $N$, except for the FT scheme. This is because with increasingly large $T$, the UAV is able to hover over its favorable locations for a longer time, which leads to higher secrecy rate. However, if the mobility of the UAV cannot be exploited as in the FT scheme, the achieved average secrecy rate will remain unchanged even for sufficiently large $N$. Besides,
we can observe that when $N$ is small (e.g., $N=100$ s), the performance of the FT scheme is better than that of the NPS scheme, since in this case, the advantage of mobility control cannot be exploited due to the limited flight time. Moreover, the proposed Algorithm \ref{tab:table1} outperforms the other analyzed schemes.

\subsubsection{Average secrecy rate versus the average transmit power $\bar{P}$}
Fig. \ref{SRVSAVP} plots the average secrecy rates of different schemes versus $\bar{P}$. As shown, the proposed Algorithm \ref{tab:table1} always achieves the highest average secrecy rate, while the without AN scheme provides the lowest average secrecy rate. The performance achieved by the proposed scheme, the FT scheme and the NPS scheme all improves with the increasing of $\bar{P}$, while that by the without AN scheme does not change much. This is because the Bob-Eve distance is set to $L=100$ m, which is relatively
close and thus the qualities of the UAB-Bob and UAV-Eve links both improve as $\bar{P}$ increases since no AN is available to degrade the UAV-Eve link. Besides, we observe that the performance gain of the proposed algorithm over the NPS scheme gradually decreases and approaches zero as $\bar{P}$ increases. This is reasonable since the achievable rates of the UAV-Bob and UAV-Eve links are $\log(\cdot)$ functions of $\{\textrm{SNR}_{\textrm{I}}[i]\}$ and $\{ \textrm{SINR}_{\textrm{E}}[i] \}$, they tend to gradually saturate as $\bar{P}$ increases, which will limit the gain offered by the power splitting.

\begin{figure}[htbp]
		\setlength{\abovecaptionskip}{-0.05cm}
	\setlength{\belowcaptionskip}{-0.05cm}
	\centering
	\includegraphics[scale=0.48]{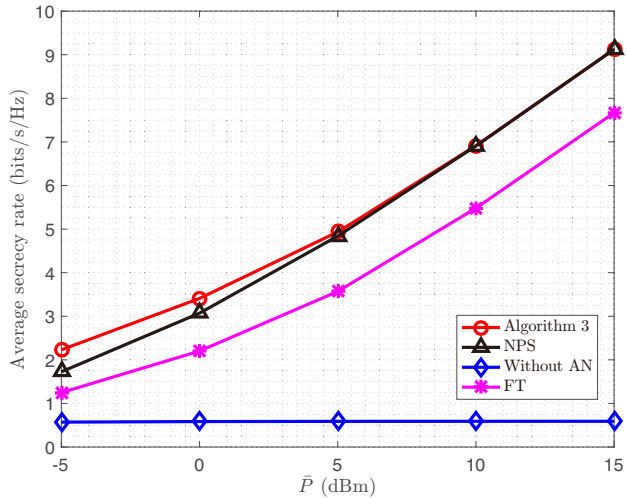}
	\caption{Average secrecy rate versus $\bar{P}$.} \vspace{-1em}
	\label{SRVSAVP}
\end{figure}

\subsubsection{Trajectories under various values of $E_\textrm{th}$}
Fig. \ref{trajectoryptr} shows the trajectories of the UAV by employing different schemes when $E_{\textrm{th}} = 14.55$ kJ and $E_{\textrm{th}} = 19.40$ kJ. First, we can see that with larger $E_{\textrm{th}}$, the UAV can fly closer to Bob to achieve a higher secrecy rate and this holds for all the considered schemes. Second, it is observed that the trajectories of the proposed algorithm and the without AN scheme differ significantly with $E_{\textrm{th}} = 14.55$ kJ or $E_{\textrm{th}} = 19.40$ kJ, especially when the UAV flies towards Bob.  Specifically, with the ability to transmit AN (in the proposed algorithm and the NPS scheme), the UAV can fly closer to Bob and Eve, while for the without AN scheme, the UAV has to keep a certain distance away from Bob in order to weaken the UAV-Eve link. Besides, the trajectories of the proposed algorithm and the NPS scheme are almost identical, which implies that the UAV's trajectory is not sensitive to the power splitting ratios under the considered simulation setups.

\begin{figure}[htbp] \vspace{-1em}
		\setlength{\abovecaptionskip}{-0.05cm}
	\setlength{\belowcaptionskip}{-0.05cm}
	\centering
	\includegraphics[scale=0.48]{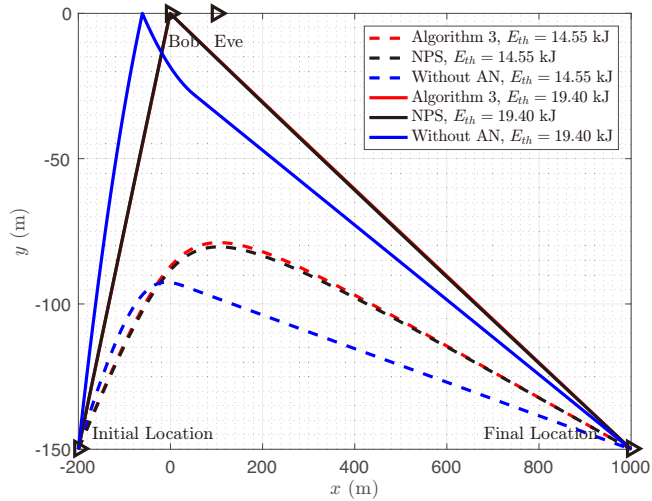}
	\caption{Trajectories of the UAV under various values of $E_\textrm{th}$.} \vspace{-0.5em}
	\label{trajectoryptr}
\end{figure}
	
\subsubsection{Trajectories under various values of $N$}
In Fig. \ref{TrajectoryFigure}, we show the trajectories of the UAV by employing different schemes when $N=103$ s and $N = 125$ s. We observe that when the flight time is long enough (i.e., $N=125$ s), the UAV can fly close to Bob and Eve, while when $N=103$ s, the UAV has to head back to the final location before it can reach its most favorable location. Besides, similar to the results in Fig. \ref{trajectoryptr}, the trajectories of the proposed algorithm and the without AN scheme are different owing to the difference in the ability of transmitting AN signals.
\vspace{-1em}
\begin{figure}[htbp]
		\setlength{\abovecaptionskip}{-0.05cm}
	\setlength{\belowcaptionskip}{-0.05cm}
	\centering
	\includegraphics[scale=0.48]{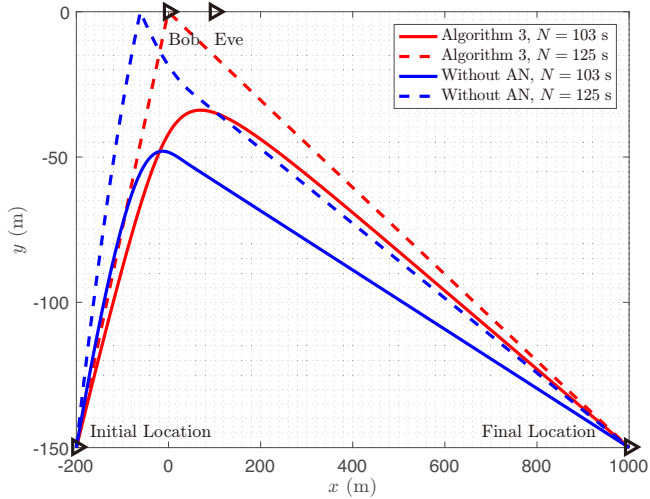}
	\caption{Trajectories of UAV using different algorithms.} \vspace{-1em}
	\label{TrajectoryFigure}
\end{figure}

\section{Conclusion}
\label{Section7:conclusion}
In this work, we proposed a power splitting approach to secure the UAV communication against a potential ground Eve, by enabling the UAV to transmit confidential information and AN simultaneously. By exploiting the power splitting capability of the UAV and its controllable mobility, we maximized the average secrecy rate by jointly optimizing the UAVs’ trajectory, the transmit power levels and the power splitting ratios over time. An iterative algorithm with very low complexity was proposed to solve the considered optimization problem with guaranteed convergence. Numerical results showed the effectiveness of our proposed algorithm. It is worth noting that the proposed algorithm and the underlying techniques that are employed can be extended to other joint power and trajectory optimization problems for UAV-enabled communication systems.
\vspace{-4mm}

\begin{appendices}
\section{Brief introduction to ADMM}
\label{appendixA}
To illustrate the idea of the ADMM, let us consider the following convex optimization problem:
\begin{equation} \label{ADMM}
\begin{aligned}
	\min\limits_{\mathbf{x}\in\mathbb{R}^{n \times 1},\; \mathbf{z}\in\mathbb{R}^{m \times 1}}\;  f(\mathbf{x})+g(\mathbf{z}) \quad
	\textrm{s.t.}\;  \mathbf{Ax}+\mathbf{Bz}=\mathbf{c},\;
	\mathbf{x} \in \mathcal{C}_1,\; \mathbf{z} \in \mathcal{C}_2,
	\end{aligned}
\end{equation}
where $f(\cdot): \mathbb{R}^{n\times 1}\mapsto \mathbb{R}$  and $g(\cdot): \mathbb{R}^{m\times 1} \mapsto \mathbb{R}$ are convex functions, $\mathcal{C}_1 \in \mathbb{R}^{n\times 1}$ and $\mathcal{C}_2 \in \mathbb{R}^{m\times 1}$ are non-empty convex sets, $\mathbf{A}\in\mathbb{R}^{p\times n},\mathbf{B}\in\mathbb{R}^{p\times m},\mathbf{c}\in\mathbb{R}^{p \times 1}$. Assume that problem \eqref{ADMM} is feasible and strong duality holds.

The ADMM solves problem \eqref{ADMM} by resorting to the following AL problem:
\begin{equation}
\begin{aligned}
\min\limits_{\mathbf{x}\in\mathbb{R}^{n \times 1},\;\mathbf{z}\in\mathbb{R}^{m \times 1}} \;  L_{\rho}(\mathbf{x},\mathbf{z},\bm{\lambda})\quad
\textrm{s.t.}\;  \mathbf{x} \in \mathcal{C}_1,\; \mathbf{z} \in \mathcal{C}_2,
\end{aligned}
\end{equation}
where $L_{\rho}(\mathbf{x},\mathbf{z},\bm{\lambda})= f(\mathbf{x})+g(\mathbf{z})+\bm{\lambda}^T(\mathbf{Ax}+\mathbf{Bz}-\mathbf{c})  +\frac{\rho}{2}\|\mathbf{Ax}+\mathbf{Bz}-\mathbf{c}\|^2$, $\bm{\lambda}$ denotes the dual variable and $\rho$ is the penalty parameter. Then, the ADMM iterates over the following three steps:
\begin{subequations}
\begin{align}
	&\mathbf{x}_{k+1}=\argmin_{\mathbf{x}\in\mathbb{R}^{n \times 1}} L_{\rho}(\mathbf{x},\mathbf{z}_k,\bm{\lambda}_k), \label{admm1}\\
	&\mathbf{z}_{k+1}=\argmin_{\mathbf{z}\in\mathbb{R}^{m \times 1}} L_{\rho}(\mathbf{x}_{k+1},\mathbf{z},\bm{\lambda}_k),\label{admm2}\\
	&\bm{\lambda}_{k+1}=\bm{\lambda}_k+\rho(\mathbf{Ax}_{k+1}+\mathbf{Bz}_{k+1}-\mathbf{c}),\label{admm3}
\end{align}
\end{subequations}
where $k$ denotes the iteration index. The convergence criterion of the ADMM can be expressed as $\|\mathbf{r}_{k+1}\|\leq\epsilon$ and $ \|\mathbf{s}_{k+1}\|\leq\epsilon$,  where $\mathbf{r}_{k+1}$ and $\mathbf{s}_{k+1}$ denote the primal residual and dual residual in the $(k+1)$-th iteration, which are defined as
\begin{equation}
\label{prdr}
\begin{aligned}
\mathbf{r}_{k+1}=\mathbf{Ax}_{k+1}+\mathbf{Bz}_{k+1}-\mathbf{c},\;
\mathbf{s}_{k+1}=\rho\mathbf{A}^T\mathbf{B}(\mathbf{z}_{k+1}-\mathbf{z}_{k}).
\end{aligned}
\end{equation}
It can be seen that the ADMM alternatively performs one iteration of primal variables updates, i.e., \eqref{admm1} and \eqref{admm2},  and one step of outer subgradient update for the dual variable, i.e., \eqref{admm3}. It converges to the global optimum of problem \eqref{ADMM} under relatively loose conditions. For more details, please refer to \cite{Bertsekas1989}.

\section{Optimal Solution to Problem \eqref{sb1}}
\label{appendixB}
It can be readily seen that problem \eqref{sb1} is convex and strong duality holds, therefore, it can be solved by resorting to the dual problem. Specifically, the Lagrangian function of problem  \eqref{sb1} is given by $L_1=L_{\delta,1}-\mu((x[2i]-\bar{x}[2i+1])^{2}+(y[2i]-\bar{y}[2i+1])^{2}- V_{\textrm{max}}^2)$, where $\mu$ denotes the Lagrangian multiplier. Then, by setting $\frac{\partial L_1}{\partial\bar{x}[2i+1]}=0$, we have
\begin{equation}
	\label{xbarnpo}
	\bar{x}^{\textrm{opt}}[2i+1]=\frac{1}{2\mu+\delta}(2\mu x[2i]+\delta x[2i+1]-\lambda_{x_{2i+1}}).
\end{equation}
Substituting \eqref{xbarnpo} into $L_1$ and taking the partial derivative of $L_1$ with respect to $x[2i]$, we can obtain
\begin{equation} \label{appendix_b_1}
\begin{aligned}
	& -\delta\big(x[2i]-\bar{x}[2i]-\frac{\lambda_{x_{2i}}}{\delta}+x[2i]-\hat{x}[2i] -\frac{\eta_{x_{2i}}}{\delta}+x[2i] \\
	&  -\tilde{x}[2i]-\frac{\omega_{x_{2i}}}{\delta}\big) -\frac{2\mu\delta}{2\mu+\delta}\big(x[2i] -x[2x+1] +\frac{\lambda_{x_{2i+1}}}{\delta}\big)=0.
\end{aligned}
\end{equation}
Based on \eqref{appendix_b_1}, the optimal $x^{\textrm{opt}}[2i]$ can be expressed as
\begin{equation}
\begin{aligned}
	\label{xn}
	x^{\textrm{opt}}[2i]= \frac{1}{3\delta+\frac{2\mu\delta}{2\mu+\delta}}\big(\delta(\bar{x}[2i]+\hat{x}[2i]+\tilde{x}[2i])+\lambda_{x_{2i}} +\eta_{x_{2i}}+\omega_{x_{2i}}+\frac{2\mu\delta}{2\mu+\delta}(x[2i+1]-\frac{\lambda_{x_{2i+1}}}{\delta})\big).
\end{aligned}
\end{equation}
Similarly, we have
\begin{equation}
	\label{ybarnpo}
	\bar{y}^{\textrm{opt}}[2i+1]=(2\mu y[2i]+\delta y[2i+1]-\lambda_{y_{2i+1}})/(2\mu+\delta),
\end{equation}
and
\begin{equation}
	\label{yn}
	\begin{aligned}
	y^{\textrm{opt}}[2i]= \frac{1}{3\delta+\frac{2\mu\delta}{2\mu+\delta}}\big(\delta(\bar{y}[2i]+\hat{y}[2i]+\tilde{y}[2i])+\lambda_{y_{2i}} +\eta_{y_{2i}}+\omega_{y_{2i}}+\frac{2\mu\delta}{2\mu+\delta}(y[2i+1]-\frac{\lambda_{y_{2i+1}}}{\delta})\big).
	\end{aligned}
\end{equation}
Then,  it is not difficult to see that if $\bar{x}^{\textrm{opt}}[2i+1]$, $x^{\textrm{opt}}[2i]$, $\bar{y}^{\textrm{opt}}[2i+1]$ and $y^{\textrm{opt}}[2i]$ satisfy $(x^{\textrm{opt}}[2i]-\bar{x}^{\textrm{opt}}[2i+1])^{2}+(y^{\textrm{opt}}[2i]-\bar{y}^{\textrm{opt}}[2i+1])^{2}\leq  V_{\textrm{max}}^2$ when $\mu=0$, then this is the optimal solution. Otherwise, we substitute \eqref{xbarnpo}, \eqref{xn}, \eqref{ybarnpo} and \eqref{yn} into $(x[2i]-\bar{x}[2i+1])^{2}+(y[2i]-\bar{y}[2i+1])^{2}= V_{\textrm{max}}^2$ (due to the complementary slackness). By solving this equation with respect to $\mu$, we have $\mu^{\textrm{opt}}={(\sqrt{A}-3\delta^2)}/{(8\delta)}$, where
\begin{equation}
\begin{aligned}
A=& \frac{\delta^2}{V_{\textrm{max}}^2}\Big(\big(\delta(\hat{x}[2i]+\tilde{x}[2i]+\bar{x}[2i])+\omega_{x_{2i}}+\eta_{x_{2i}}+\lambda_{x_{2i}}+3\lambda_{x_{2i+1}}-3\delta x[2i+1]\big)^2\\
& +\big(\delta(\hat{y}[2i]+\tilde{y}[2i]+\bar{y}[2i])+\omega_{y_{2i}} +\eta_{y_{2i}}+\lambda_{y_{2i}}+3\lambda_{y_{2i+1}}-3\delta y[2i+1]\big)^2\Big).
\end{aligned}
\end{equation}
Substituting $\mu^{\textrm{opt}}$ back into \eqref{xbarnpo}, \eqref{xn}, \eqref{ybarnpo} and \eqref{yn}, we can obtain the optimal solution of problem \eqref{sb1}.

\section{Optimal Solution to Problem \eqref{sb3}} \label{appendixC}
For notational simplicity, in this appendix, we ignore the time slot index $2i+1$ in the variables $\{\tilde{x}[2i+1],\tilde{y}[2i+1],x[2i+1], y[2i+1], u[2i+1], t[2i+1],\rho[2i+1], p[2i+1], \omega_{x_{2i+1}}, \omega_{y_{2i+1}}\}$ without loss of generality. First, we can observe that $L_{\delta,2}$ is a decreasing function with respect to $u$, therefore the optimal $u$, denoted as $u^{\textrm{opt}}$, must satisfy $u^{\textrm{opt}}=\tilde{x}^2+\tilde{y}^2+H^2$.  By substituting $u^{\textrm{opt}}$ into $L_{\delta,2}$, we obtain $L_{\delta,2} = -a(\tilde{x}^2+\tilde{y}^2+H^2) -\frac{\sigma^2t}{t_f+\gamma_0p}+\log(\gamma_0(1-\rho)p+\sigma^2t) -\frac{\delta}{2}\big((x-\tilde{x}-\frac{\omega_{x}}{\delta})^2 +(y-\tilde{y}-\frac{\omega_{y}}{\delta})^2\big)$.  Hence, problem \eqref{sb3} becomes
\begin{equation} \label{tsb3}
\max\limits_{\tilde{x},\; \tilde{y},\; t}\; L_{\delta,2} \quad
\textrm{s.t.} \; t\leq -\tilde{x}_f^2+2\tilde{x}_f\tilde{x}+L^2
-2\tilde{x}L-\tilde{y}_f^2+2\tilde{y}_f\tilde{y}+H^2.
\end{equation}
Since problem \eqref{tsb3} is convex, we can globally solve it by resorting to its dual problem. The corresponding Lagrange function for problem \eqref{tsb3} can be expressed as $L_{\delta,2,\tilde{\mu}}\triangleq L_{\delta,2} - \tilde{\mu}\big( t+\tilde{x}_f^2 -2(\tilde{x}_f-L)\tilde{x}-L^2 +\tilde{y}_f^2-2\tilde{y}_f\tilde{y}-H^2\big)$, where $\tilde{\mu}$ is the dual variable.

By checking the first-order optimality condition, we can express the optimal solution of problem \eqref{tsb3} as a function of $\tilde{\mu}$, i.e., 
\begin{equation}
\label{tsb3sm}
\begin{aligned}
& \tilde{x}^{\textrm{opt}}(\tilde{\mu})=\frac{\delta x-\omega_{x} + 2\tilde{\mu}(\tilde{x}_f-L)}{2a+\delta},\;
\tilde{y}^{\textrm{opt}}(\tilde{\mu})=\frac{\delta y-\omega_{y} + 2\tilde{\mu}\tilde{y}_f}{2a+\delta},\\
& t^{\textrm{opt}}(\tilde{\mu})=\frac{\sigma^2t_f+\gamma_0p}{\tilde{\mu}(\sigma^2t_f+\gamma_0p)+{\sigma^2}}-\frac{\gamma_0(1-\rho)p}{\sigma^2}.
\end{aligned}
\end{equation}
If the solution $\{\tilde{x}^{\textrm{opt}}(0),\tilde{y}^{\textrm{opt}}(0),t^{\textrm{opt}}(0)\}$ automatically satisfies the constraint of problem \eqref{tsb3}, then it is the optimal solution, otherwise, we can see that the optimal dual variable $\tilde{\mu}^{\textrm{opt}}$ satisfies
\begin{equation}
\label{Lagproblem}
\begin{aligned}
	t^{\textrm{opt}}(\tilde{\mu}^{\textrm{opt}})= -\tilde{x}_f^2 +2(\tilde{x}_f-L)\tilde{x}^{\textrm{opt}}(\tilde{\mu}^{\textrm{opt}})+L^2  -\tilde{y}_f^2+2\tilde{y}_f\tilde{y}^{\textrm{opt}}(\tilde{\mu}^{\textrm{opt}})+H^2.
\end{aligned}
\end{equation}
Substituting \eqref{tsb3sm} into \eqref{Lagproblem} and solving the resulting quadratic equation, we obtain $\tilde{\mu}^{\textrm{opt}} = {\left(-b_{\tilde{\mu}}+\sqrt{b_{\tilde{\mu}}^2-4a_{\tilde{\mu}}c_{\tilde{\mu}}} \right) }/{(2a_{\tilde{\mu}})}$, where  $ a_{\tilde{\mu}}=\frac{4(\tilde{x}_f-L)^2+4\tilde{y}_f^2}{2a+\delta}$, $ b_{\tilde{\mu}}=\frac{\sigma^2 a_{\tilde{\mu}}}{\sigma^2t_f+\gamma_0p}+d_{\tilde{\mu}}$, $c_{\tilde{\mu}}= \frac{\sigma^2d_{\tilde{\mu}}}{\sigma^2t_f+\gamma_0p}-1$ and $d_{\tilde{\mu}}= -\tilde{x}_f^2-\tilde{y}_f^2+L^2+H^2+\frac{\gamma_0(1-\rho)p}{\sigma^2}  +\frac{2(\tilde{x}_f-L)(\delta x-\omega_{x})+2\tilde{y}_f(\delta y-\omega_{y})}{2a+\delta}$.

\section{Optimal Solution to Problem \eqref{sb21}} \label{appendixD}
The Lagrangian function of problem \eqref{sb21} can be expressed as $L_3= L_{\delta,3}-\phi\sum\limits_{i=1}^{T-1}((\ddot{x}[i]-\ddot{x}[i+1])^2+(\ddot{y}[i]-\ddot{y}[i+1])^2)+\phi\frac{E_{\textrm{tr}}}{\kappa}$. By setting $\frac{\partial L_{\delta,3}}{\partial\hat{x}[2i+1]}=0$ and $\frac{\partial L_{\delta,3}}{\partial\hat{y}[2i+1]}=0$, we have
\begin{equation}
	\label{xhatp}
	\hat{x}^{\textrm{opt}}[2i+1]=\frac{1}{2}\Big(\ddot{x}[2i+1]+\frac{\theta_{x_{2i+1}}}{\delta}+x[2i+1]-\frac{\eta_{x_{2i+1}}}{\delta}\Big),
\end{equation}
\begin{equation}
	\label{yhatp}
	\hat{y}^{\textrm{opt}}[2i+1]=\frac{1}{2}\Big(\ddot{y}[2i+1]+\frac{\theta_{y_{2i+1}}}{\delta}+y[2i+1]-\frac{\eta_{y_{2i+1}}}{\delta}\Big).
\end{equation}
Substituting \eqref{xhatp} and \eqref{yhatp} into $L_3$ and letting $\frac{\partial L_3}{\partial\ddot{x}[2i+1]}=0$ and $\frac{\partial L_3}{\partial \ddot{x}[2i]}=0$, we can obtain the following equations:
\begin{equation}
\begin{aligned}
	\left(-4\phi-\frac{\delta}{2}\right)\ddot{x}[2i+1]+2\phi(\ddot{x}[2i+2]+\ddot{x}[2i])
	+\frac{\delta}{2}\left(x[2i+1]-\frac{\eta_{x_{2i+1}}+\theta_{x_{2i+1}}}{\delta}\right)=0,
	\end{aligned}
\end{equation}
\begin{equation}
	-(\delta+4\phi)\ddot{x}[2i]+\delta\hat{x}[2i]-\theta_{x_{2i}}+2\phi(\ddot{x}[2i+1]+\ddot{x}[2i+1])=0.
\end{equation}
Similarly, for $\ddot{y}[2i+1]$ and $\ddot{y}[2i]$, we have
\begin{equation}
\begin{aligned}
	\left(-4\phi-\frac{\delta}{2}\right)\ddot{y}[2i+1]+2\phi(\ddot{y}[2i+2]+\ddot{y}[2i])
	+\frac{\delta}{2}\left(y[2i+1]-\frac{\eta_{y_{2i+1}}+\theta_{y_{2i+1}}}{\delta}\right)=0,
	\end{aligned}
\end{equation}
\begin{equation}
	-(\delta+4\phi)\ddot{y}[2i]+\delta\hat{y}[2i]-\theta_{y_{2i}}+2\phi(\ddot{y}[2i+1]+\ddot{y}[2i+1])=0.
\end{equation}
Together with $	\ddot{x}[1]=x_1$, $\ddot{x}[T]=x_T$, $\ddot{y}[1]=y_1$ and $\ddot{y}[T]=y_T$, we can employ the Gaussian elimination to solve the above equations for a given dual variable $\phi$ and the optimal $\phi$ can be found by using the Bisection method. 

\end{appendices}

\vspace{-1em}
\bibliographystyle{IEEEtran}
\bibliography{references}

\end{document}